\documentclass{article}

\usepackage{arxiv}

\usepackage{hyperref}
\usepackage{url}

\usepackage{natbib}

\usepackage[utf8]{inputenc} % allow utf-8 input
\usepackage{url}            % simple URL typesetting
\usepackage{booktabs}       % professional-quality tables
\usepackage{amsfonts}       % blackboard math symbols
\usepackage{nicefrac}       % compact symbols for 1/2, etc.
\usepackage{microtype}      % microtypography
\usepackage{xcolor}         % colors
\usepackage{multirow}
\usepackage{multicol}
\usepackage{wrapfig}

\usepackage{subfigure}
\usepackage{tabularx}
\usepackage{tikz-cd}
\usetikzlibrary{arrows.meta}
\usepackage{svg}
\usepackage{paralist}
\usepackage{pgf}
\usepackage{import}

\usepackage[english]{babel}
\usepackage{amsthm}
\usepackage{amsmath}
\newtheorem{thm}{Theorem}
\newtheorem{lem}[thm]{Lemma}
\newtheorem{cor}[thm]{Corollary}

\newtheorem{defn}[thm]{Definition}

\newcommand{\questnode}[3][\circ]{\overset{#2}{\underset{#3}{#1}}}

\newcommand{\graphnode}[1][\circ]{#1}

\usepackage[ruled,boxed]{algorithm2e}
\SetKw{Continue}{continue}
\SetKwIF{IfNot}{ElseIfNot}{}{if not}{then}{else if not}{}{}
\usepackage{xcolor}

\begin{document}

	\title{Computability of Agentic Systems}

	\author{Chatavut Viriyasuthee\\
		p.virie@gmail.com
	}

	\maketitle

	\begin{abstract}
		This paper introduces the Quest Graph, a formal framework for analyzing the capabilities of agentic systems with finite context. 
		We define abstractions that model common reasoning techniques and establish their computational power: 
		the base Quest Graph is equivalent to an unrestricted Turing machine; 
		the forward-only Finite Quest Decision Process (FQDP), despite its wide use, is only equivalent to a pushdown automaton (context-free); 
		and the Reference-Augmented QDP (RQDP) regains Turing completeness only when stateful queries are allowed. 
		Since computability affects efficiency, we then analyze the theoretical efficiency of each model by simulating task dependencies in computation graphs. 
		We show that this computational hierarchy translates to concrete performance trade-offs: 
		reference-augmented (Turing-complete) systems can be exponentially more efficient at simulating complex graphs than their non-augmented (context-free) counterparts. 
		This work provides a formal methodology for classifying and understanding the fundamental capabilities of agentic systems.
	\end{abstract}

	\section{Introduction}

	Agentic AI is an emerging term for systems capable of autonomous action (\citet{durante2024agent, acharya2025agentic}).
	Operating with minimal supervision, these systems maintain their own goals and strategically utilize available resources to achieve them.
	Language models (LMs) lie at the core of these systems, serving as the primary decision-making module that performs complex reasoning and planning by processing natural language.

	Several strategies exist for improving agentic system performance. 
	The most direct, increasing an LM's parameter size, adds computational demand to train and execute. 
	Another approach involves expanding the reasoning steps to cover multiple perspectives
	(\citet{wei2022chain, wang2022self, liu2023plan, shinn2023reflexion, zhao2023survey, wang2024survey}).
	These techniques require increasing amounts of context to handle their meta-reasoning.
	On the contrary, larger context is not always required for better performance (\citet{shi2023large, liu2024lost, jiang2024retrieve}), 
	as it can introduce irrelevant information that distracts the model. 
	In fact, another class of techniques enhances performance by decomposing complex problems into simpler, hierarchical sub-tasks, 
	which can be solved with a smaller context size (\citet{zhou2022least, wang2023boosting, chen2024boosting, yao2024tree, besta2024graph, ma2025large}). 
	These techniques call into question the attempt to improve performance solely by increasing context size, prompting a more fundamental analysis.

	In this work, we pose the question of how capable a system can be given a fixed context size.
	This is related to classical computability theory, which studies what can be computed given finite computational resources (\citet{hopcroft2001introduction}).  
	It turns out that different levels of computability can affect not only what can be computed, but also how efficiently it can be computed in some settings.

	Imagine a collection of repeatable processes $A$, $B$, and $C$, where $B$ is asynchronous (\citet{syme2011f}), and $C$ depends on $B$ and $B$ depends on $A$.
	A system limited to regular expressions can only process $A$, $B$, and $C$ in a fixed sequence, recognizing a pattern like $(ABC)+$.
	Whereas a system with context-sensitive computability can process the $A^n B^n C^n$ pattern.
	The latter enjoys greater efficiency by pipelining the asynchronous $B$ processes before proceeding to $C$, thereby ensuring minimal waiting time.
	There could also be many other scenarios where different computability levels lead to different efficiency trade-offs.

	Instead of relying on illustrative scenarios, we perform our analysis more formally using dependency graphs. 
	The question we ask is: given a preferred pattern of execution, how effectively can agentic systems with different computational capabilities simulate it? 
	To answer this, we propose a formal framework called the \textit{Quest Graph}, a name inspired by the goal-oriented nature of agentic systems. 
	This framework classifies agentic systems by demonstrating that their common inference patterns correspond to distinct levels within the formal language hierarchy (\citet{chomsky1956three, hopcroft2001introduction}). 
	We then analyze the efficiency with which each system variant can simulate computation graphs, a powerful tool for describing complex computational relationships (\citet{abadi2016tensorflow}). 
	We believe this investigation is a foundational step toward a formal understanding of the computability and capabilities of agentic systems.

	% Moreover, many processes have inherent structural dependencies that limit the benefit of a long context. 
	% For example, an agent in a dynamic environment must await feedback before its next action becomes available, 
	% a structure that naturally favors a hierarchical approach (\citet{barto2003recent, sutton1999between}).

	% This premise raises fundamental theoretical and practical question. 
	% What are the computational capabilities of agentic systems constrained by a limited context? 
	% Second, can they practically simulate the inherent dependencies within a given problem? 
	% Our goal is to develop a formal framework to answer these questions.

	% unlike the threshold circuit complexity \citet{hajnal1993threshold}, 
	% Probabilistic Machine Complexity Classes \citet{gill1974computational}, 
	% or the Quantum complexity \citet{nielsen2010quantum},
	% this work focuses on the inference pattern complexity, this is why we turns back to the Chomsky hierarchy.

	\begin{table*}[ht!]
		\begin{tabularx}{\linewidth}{@{\extracolsep{\fill}} l l | l l l}
			\toprule
			Languages & Automata & Agentic Systems & Reasoning Features & CG Complexity\\
			\midrule
			\multirow{2}{*}{Unrestricted} & \multirow{2}{*}{TM} & Quest Graph & Graph traversal & $O(N^2)$ \\
			&  & RQDP, NRQDP & Knowledge retrieval & $O(N^2 \log N)$ \\
			Context-free (CFL) & PDA & NFQDP & Search & $O(2^N)$ \\
			Deterministic CFL & DPDA & FQDP & Hierarchy & $O(2^N)$ \\
			Regular & FSA & LM & Autoregression & n/a\\
			\bottomrule
		\end{tabularx}
		\caption{
			Summary of main results: Alignment of agentic architectures with the formal language hierarchy. 
			The table maps each system variant to its corresponding automaton class and primary reasoning capability. 
			The final column reports the time complexity required to simulate a computation graph (CG) with $N$ nodes; 
			entries marked ``n/a'' indicate models incapable of simulating the graph under constant response size constraints.
		}
		\label{tab:summary}
	\end{table*}

\section{Autoregressive Language Models with Finite Context}

	To establish a baseline for agentic computability, we first consider a standard Language Model (LM) operating on a finite context window—
	a fixed-size memory buffer containing the sequence of recent tokens the model can access. 
	We define this LM as a model that progressively generates the next token based on the current contents of this window.
	At each computational step, the context is updated in two ways: the model's own output is autoregressively appended, and any new external input is also added. 
	
	This abstraction allows us to analyze the LM's computational capabilities without delving into specific architectures or training methods which may affect the model's computability.
	For example, recurrent neural networks and long short-term memory architectures (\citet{hochreiter1997long}) 
	are sufficient for regular language recognition, while transformer-based models have been shown to be weaker (\citet{hahn2020theoretical}).
	
	For this analysis, we assume the model's output is deterministic, equivalent to setting its sampling temperature to zero in many energy based implementations
	(\citet{vaswani2017attention}).

	Despite their impressive performance, LMs with finite context windows are fundamentally limited.
	Given that the vocabulary of tokens is finite, it can be formally shown that:
	\begin{thm}
		An LM with a finite context is computationally equivalent to a finite state machine (FSM).
		\label{thm:lm-computability}
	\end{thm}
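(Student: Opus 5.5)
We prove the two directions of the equivalence separately, working with the abstraction fixed above: a finite vocabulary $V$, a context window of fixed length $k$, a deterministic next-token map $f : V^{\le k} \to V$, and an update rule that appends the model's output (or an external input token) to the window and drops the oldest token once the length exceeds $k$. The key observation is that the entire information state of the system at any step is the window content $w \in V^{\le k}$. This set has at most $\sum_{i=0}^{k} |V|^i < \infty$ elements.

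\textbf{LM $\Rightarrow$ FSM.}
1. Take the state set $Q = V^{\le k}$, the input alphabet $V$, and the output alphabet $V$, so that we obtain a Mealy-style transducer or acceptor.
2. Let the start state be the initial prompt. Define the transitions as follows. On an external input token $a$, move from $w$ to $\mathrm{trunc}_k(wa)$. On an autoregressive step, emit $f(w)$ and move to $\mathrm{trunc}_k(w f(w))$.
3. Autoregressive steps consume no input, so they behave like $\varepsilon$-transitions. Because $f$ is deterministic, each such chain of steps is a deterministic path in a finite graph. I would therefore compose it (bounded generation length, or halting on an end token) into a single transition.
4. Acceptance is read off a designated accept token or a predicate on the window. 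The resulting finite automaton then recognizes, or transduces, exactly what the LM does.

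The main subtlety is this handling of self-generated steps. If generation never stops, the deterministic chain must enter a cycle within $|Q|$ steps. The composed transition is then either well defined or provably divergent, and in both cases it is decidable from finite data. Hence the composed transition is still a legitimate finite-state transition.

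\textbf{FSM $\Rightarrow$ LM.}
1. Given a DFA $(S, \Sigma, \delta, s_0, F)$, build an LM whose vocabulary is $\Sigma \cup S$.
2. Use a context of size $k = 2$. The model's rule is: if the window is $(s, a)$ with $s \in S$ and $a \in \Sigma$, output $\delta(s, a)$.
3. The window then holds the current state, the next input token is appended, and the model emits the successor state. The window always ends in the current DFA state, so acceptance is exactly membership of the last token in $F$.
4. Since $f$ is an arbitrary function on a finite domain, the abstraction permits it. We are not claiming a particular architecture realizes it; as noted above, specific architectures may be weaker.

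Combining the two simulations gives the equivalence stated in Theorem~\ref{thm:lm-computability}. The step I expect to need the most care is making precise what ``computationally equivalent'' means given that the LM interleaves inputs with its own outputs. I would fix the convention that the language of the LM is the set of input strings after which the window, following generation, satisfies the accept predicate, and verify that both constructions respect this convention.
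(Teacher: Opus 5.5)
Your proposal is correct and follows essentially the same route as the paper. The paper also encodes the window contents as FSM states, with external tokens as labelled transitions and self-generated tokens as $\varepsilon$-moves. Conversely, it builds a context-2 LM over $S \cup \Sigma$ whose next-token map on a (state, input) window is $\delta$. Your additions tighten points the paper leaves implicit without changing the argument: using $V^{\le k}$ for not-yet-full windows, collapsing deterministic $\varepsilon$-chains (including divergent cycles), and fixing an explicit acceptance convention.
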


	This theoretical equivalence has been shown for autoregressive mechanisms by \citet{deletang2022neural}. 
	It highlights a crucial distinction: finite-context LMs are equivalent to FSMs (\citet{hopcroft2001introduction}), 
	whereas their unbounded-context counterparts are Turing complete (\citet{perez2021attention}).
	Nevertheless, we provide a proof in Appendix \ref{app:lm-computability} to formally establish this baseline and to make the paper self-contained.

	The standard model of an LM treats the decision-making module and the context buffer as a single, monolithic entity. 
	We propose decoupling these components, presenting the reasoning module as a pure function that operates on an external memory structure. 
	This separation offers both practical and theoretical advantages. On a practical level, it aligns with modern, stateless architectures recognized for their scalability and testability (\citet{fielding2000architectural}). 
	More importantly, this approach allows us to define a hierarchy of agentic systems with increasing computational power, effectively scaling the complexity hierarchy.

\section{Quest Graphs and Agent Functions}

	\begin{figure}[ht!]
		\centering % To center the diagram if it's narrower than \textwidth
		\[
			\begin{tikzcd}[row sep=0.5em, column sep=0em, every arrow/.append style={dash}]
			\questnode[\bullet]{1+1=?}{2} && \questnode[\bullet]{\text{Kirk or Picard?}}{\text{Picard!}} \arrow[dr] & \\
			&& \questnode{\text{Kirk who?}}{\text{Captain...}} \arrow[u] \arrow[r] & \questnode{\text{Picard who?}}{\text{TNG...}} \\
			& \questnode[\bullet]{\text{Grill carrot}}{\pi} \arrow[dl] \arrow[d] \arrow[dr] \arrow[drr] && \\
			\questnode{\text{Open door}}{\text{Door opened}} & \questnode{\text{Enter kitchen}}{\text{Location: kitchen}} & \questnode{\text{Find carrot}}{\text{Inv: carrot}} \arrow[d] \arrow[dr] & \questnode{\text{Grill carrot}}{\text{Succeeded}}  \\
			&& \questnode{\text{Open fridge}}{\text{Fridge opened}} & \questnode{\text{Take carrot}}{\text{Inv: carrot}} 
			\end{tikzcd}
		\]
		\caption{
			Illustrative Quest Graphs. Top Left: A single-node graph. Top Right: A graph representing a multi-hop comparison task. 
			Bottom: A 2-level hierarchical graph rollout for an RL problem. The focus node is the one with a filled circle.
			$\pi$ represents a parent mark (defined in Section \ref{section:qdp}).
		}
		\label{ex:questgraphs}
	\end{figure}
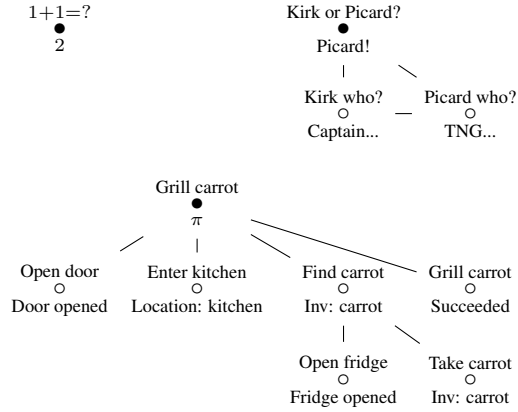
	
	The concept of a ``Quest Graph'' is inspired by working memory models that describe components supporting human cognitive functions (\citet{baddeley1992working}). 
	Working memory provides temporary storage for information accessible to higher cognitive processes like decision-making. 
	Adopting a thematic naming convention inspired by quests, we refer to the decision-making module as an ``agent".
	The interaction between these two components constitutes our proposed framework.
	
	\subsection{Definitions and Mechanics}
	\label{section:mechanics}

		Let a \textit{\textbf{Quest Graph}} be a dynamic, undirected graph $\mathcal{Q} = (V, E)$. 
		The set of nodes, $V$, is composed of elements from a finite set of node configurations, $\mathcal{V}$. 
		The set of edges, $E$, connects pairs of nodes. 
		The Quest Graph also maintains a reference to a specific \textit{\textbf{focus node}}, $v_f \in V$.

		Each node configuration $v \in \mathcal{V}$ is a tuple $(g, r)$ consisting of two components: 
		a static \textit{\textbf{goal}}, $g \in \mathcal{G}$, which is set upon the node's creation and remains unchanged, 
		and a dynamic \textit{\textbf{response}}, $r \in \mathcal{R}$, which the agent can modify during the computation. 
		The response set, $\mathcal{R}$, may include a special empty symbol, $\varepsilon$, to signify an \textit{\textbf{incomplete}} state. 
		Both the set of all possible goals, $\mathcal{G}$, and all possible responses, $\mathcal{R}$, are finite.

		The agent operates on the Quest Graph by reading data from the focus node and its neighbors to infer its next action, using the graph as its exclusive working memory. 

		Let $C \in \mathbb{Z}^+$ be a positive integer that defines the maximum number of neighbors the agent can consider at one time. 
		This parameter $C$ does not restrict a node's true degree in the graph but rather limits the agent's simultaneous processing capacity, analogous to a cognitive load limit.
		Let $\mathcal{V}^{\leq C} = \bigcup^C_{k = 0} \mathcal{V}^{1+k} $ be a finite set that includes all configurations of a focus node and an ordered set of up to $C$ of its neighbors.
		Let $\mathcal{E}$ represent a set of edge pointers referring to one of the neighbors of, or a self-loop on, the focus node.
		Let $\mathcal{E}^{\leq C} = \bigcup^C_{k = 0} \mathcal{E}^{1+k} $ be a finite set that includes all configurations of edge pointers to the focus node and up to $C$ of its neighbors.

		Next, we define the \textit{\textbf{agent function}}, $\chi$, which takes the current \textit{\textbf{local context}} of the Quest Graph—an element from $\mathcal{V}^{\leq C}$ representing the focus node and its neighbors—and outputs an action. 
		There are three possible actions:
		\begin{inparaenum}
			\item \textit{\textbf{Discover node}}, which creates a new node with a specified goal and attaches it to selected nodes from the current context. 
			The output is an element of $\mathcal{V} \times \mathcal{E}^{\leq C}$.
			\item \textit{\textbf{Respond then move focus}}, which updates the response of the focus node and shifts the focus to one of its neighbors. 
			The output is an element of $\mathcal{R} \times \mathcal{E}$.
			\item \textit{\textbf{Stop execution}}, which terminates the computation and returns no parameters.
		\end{inparaenum}
				
		% The definitions of actions are shown in Table \ref{tab:actions}.
		% \begin{table}[t]
		% 	\centering
		% 	\begin{tabularx}{\linewidth}{@{\extracolsep{\fill}} l l X}
		% 		\toprule
		% 		Action & Return parameters & Quest graph performs \\
		% 		\midrule
		% 		Discover node & $\mathcal{V} \times \mathcal{E}^{\leq C}$ & Create a new node and attach to selected nodes from the neighbors \\
		% 		Respond then move focus & $\mathcal{R} \times \mathcal{E}$ & Change the response of the current quest node and move focus to the selected neighbor \\
		% 		Stop execution & empty set & Stop the agent function execution and return no action \\
		% 		\bottomrule
		% 	\end{tabularx}
		% 	\caption{
		% 		Agent actions for interacting with the Quest Graph.
		% 	}
		% 	\label{tab:actions}
		% \end{table}

		Through these actions, the Quest Graph evolves in a manner analogous to a player progressing through quests in a role-playing game. An example of this evolution is depicted in Figure \ref{ex:evolution}.
		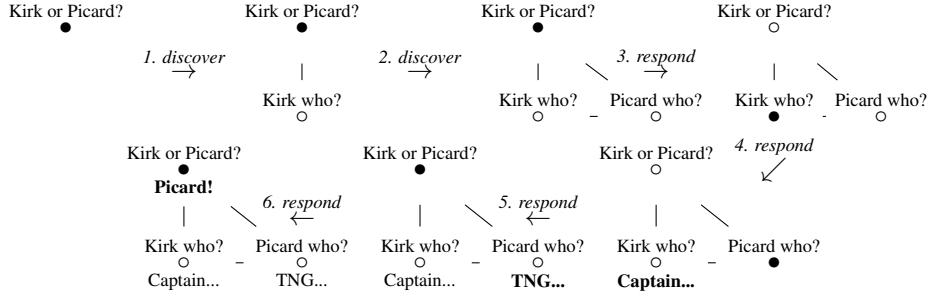
\begin{figure*}[ht!]
			\centering
			\[
				\begin{tikzcd}[row sep=-1em, column sep=-0.75em, every arrow/.append style={dash}]
				\questnode[\bullet]{\text{Kirk or Picard?}}{} &&
				\questnode[\bullet]{\text{Kirk or Picard?}}{} &&
				\questnode[\bullet]{\text{Kirk or Picard?}}{} &&
				\questnode{\text{Kirk or Picard?}}{} &
				\\
				&\questnode[\rightarrow]{\textit{1. discover}}{}&
				&\questnode[\rightarrow]{\textit{2. discover}}{}&
				&\questnode[\rightarrow]{\textit{3. respond}}{}&
				&
				\\
				&&
				\questnode{\text{Kirk who?}}{} \arrow[uu] &&
				\questnode{\text{Kirk who?}}{} \arrow[uu] \arrow[r] & \questnode{\text{Picard who?}}{} \arrow[uul]&
				\questnode[\bullet]{\text{Kirk who?}}{} \arrow[uu] \arrow[r] & \questnode{\text{Picard who?}}{} \arrow[uul]
				\\
				&\questnode[\bullet]{\text{Kirk or Picard?}}{\textbf{Picard!}} &
				&\questnode[\bullet]{\text{Kirk or Picard?}}{} &
				&\questnode{\text{Kirk or Picard?}}{} & \questnode[\swarrow]{\textit{4. respond}}{}
				&
				\\
				&&\questnode[\leftarrow]{\textit{6. respond}}{}
				&&\questnode[\leftarrow]{\textit{5. respond}}{}
				&&
				&
				\\
				&\questnode{\text{Kirk who?}}{\text{Captain...}} \arrow[uu] \arrow[r] & \questnode{\text{Picard who?}}{\text{TNG...}} \arrow[uul]
				&\questnode{\text{Kirk who?}}{\text{Captain...}} \arrow[uu] \arrow[r] & \questnode{\text{Picard who?}}{\textbf{TNG...}} \arrow[uul]
				&\questnode{\text{Kirk who?}}{\textbf{Captain...}} \arrow[uu] \arrow[r] & \questnode[\bullet]{\text{Picard who?}}{} \arrow[uul]
				&
				\end{tikzcd}
			\]
			\caption{
				Evolution of a Quest Graph for a multi-hop question answering task. 
				Steps 1-2: ``Discover node" actions generate new nodes for sub-quests. 
				Step 3: A ``Respond then move focus" action returns empty update to the main node but shifts focus to a sub-quest. 
				Steps 4-6: Subsequent ``Respond then move focus" actions provide answers to the sub-quests and, finally, to the main quest node.
			}
			\label{ex:evolution}
		\end{figure*}

	A key theoretical question thus arises from these mechanics: can an agent, limited by a finite context capacity $C$, achieve universal computation within this framework?

	\subsection{Turing Completeness}

		To establish the computational universality of our framework, we demonstrate that the Quest Graph system is Turing complete 
		by constructing an agent function that simulates a standard Turing machine (TM) (\citet{turing1936computable, hopcroft2001introduction}) using the Quest Graph as its operational environment. 
		Specifically, we consider a TM with an infinitely long tape, where the head can move either left (L) or right (R) at each step.

		\begin{thm}
			An agent operating on a Quest Graph with a finite context size is Turing complete.
		\end{thm}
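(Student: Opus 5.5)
We simulate an arbitrary TM $M=(Q,\Gamma,\delta,q_0,q_{acc},q_{rej})$ with a Quest Graph whose nodes form a doubly linked path representing the tape. Each node stands for one tape cell, and the focus node plays the role of the head. The finite goal and response sets are chosen so that a node configuration can record everything local the agent needs. We take $\mathcal{G}$ to be a finite set of cell tags; a position index mod 3 suffices, and the reason is explained below. We take $\mathcal{R}=\Gamma\times(Q\cup\{\bot\})\cup\{\varepsilon\}$, where the response stores the cell's symbol and, on the head cell only, the current machine state. A new cell has response $\varepsilon$, which the agent reads as the blank symbol. Initially the graph is a path encoding the input, and the focus is on the leftmost cell with state $q_0$. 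Since each cell has at most two neighbours, $C=2$ suffices, and the agent always sees the whole local neighbourhood.

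\textbf{Main difficulty: orientation.} The graph is undirected and the edge pointers in $\mathcal{E}$ refer to neighbours only through the local context, so the agent must be able to tell which neighbour is ``left'' and which is ``right''. We handle this by giving cell $i$ the goal $i \bmod 3$. This goal is fixed at creation, as the definition requires, because when we discover a new cell we already know its index relative to the current cell. Looking at the focus node with tag $t$, the left neighbour has tag $t-1$ and the right neighbour has tag $t+1 \pmod 3$, and these differ whenever both exist. If only one neighbour exists, its tag still determines whether it lies to the left or the right. The agent can therefore identify, purely from the local context $\mathcal{V}^{\leq C}$, the edge pointer to move left or right, or detect that the required neighbour is missing. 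One should check that this is compatible with the ordering and presentation of neighbours in the local context. We assume $\chi$ may depend on the full tuple, so that ordering is irrelevant.

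\textbf{One TM step.} Suppose the focus cell holds $(a,q)$ with $q$ non-halting, and let $\delta(q,a)=(q',b,D)$.
\begin{itemize}
\item If the neighbour in direction $D$ is missing, $\chi$ outputs \emph{Discover node} with goal $(t\pm1)\bmod 3$ and response $\varepsilon$, attached to the focus node. The focus stays put, so on the next invocation the neighbour exists. We must fix a convention in which the discovered node carries a response field, or rely on $\varepsilon$ meaning blank.
\item $\chi$ then outputs \emph{Respond then move focus} with response $(b,\bot)$ and the edge toward direction $D$.
\end{itemize}

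\textbf{Transferring the state.} The state $q'$ must reach the new head cell, but a respond action writes only to the old focus node, whose response now records $(b,\bot)$. To fix this, we keep the state on the old cell temporarily. The old cell is written as $(b,q'^{\,\text{out}})$, a marked ``pending'' response; the response set remains finite. The new focus then sees a neighbour carrying a pending state $q'$ while its own response has no state. In that case $\chi$ performs a bookkeeping phase. First it moves back with a self-loop respond that writes nothing, then it clears the marker on the old cell and moves focus again. Clearing alone loses the state, so we avoid that by letting the new cell first copy the state into itself: it does a respond via self-loop with $(a',q')$. Then it moves to the old cell, which clears its marker, and moves focus back. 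The old cell can tell which neighbour to return to, because that neighbour now holds a state. Each TM step is thus simulated by $O(1)$ agent actions. When the state is $q_{acc}$ or $q_{rej}$, $\chi$ outputs \emph{Stop}, with the output readable from the graph.

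\textbf{Correctness.} By induction on the number of TM steps, after each simulated step the path encodes the tape contents, with blanks for undiscovered cells. Exactly one cell carries an unmarked state, and the focus is on it. Because $\chi$ is a finite function over finite sets, it is a valid agent function with context size $C=2$. Hence every TM is simulated and the system is Turing complete.

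The converse direction (the system is no stronger than a TM) is immediate, since the graph is finite at every step and can be encoded on a tape. It can be added if equivalence is wanted.
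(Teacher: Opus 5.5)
Your proof is correct, and it shares the paper's skeleton: the tape is a path, the focus is the head, $C=2$, and \emph{discover} extends the tape. Where it differs is in how it resolves the two real difficulties.

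\textbf{Orientation.} You use static goal tags $i \bmod 3$. The paper leaves goals unused and stores a direction symbol in each response. It relies on the dynamic invariant that every cell left of the head shows $R$ and every cell right of it shows $L$, so the local pattern is always $[R,L,L]$ or $[R,R,L]$.

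\textbf{State.} You insist that the state live on the head cell. Because \emph{respond} writes only to the cell being left, this forces your four-phase handoff: pending mark, self-loop copy, return, clear. The paper avoids the handoff entirely. It leaves the new state on the cell just departed, and the next invocation reads it from that neighbour, identified as the one whose direction symbol is opposite to the focus's own.

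\textbf{What each route buys.} The paper simulates each TM step with a single agent action, plus occasional discoveries, and needs no goal information. The price is an inductive argument about the direction pattern and a careful initialization, with the start state planted on a neighbour and all directions pointing inward. Your route costs a constant-factor slowdown and extra pending symbols. In return, your orientation invariant is purely structural, since tags are fixed at creation and never rewritten. Your state invariant (``exactly one unmarked state, at the focus'') mirrors an ordinary TM configuration, so correctness is easier to check.

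\textbf{Write-up fixes.} Delete the superseded drafts: the bullet that writes $(b,\bot)$, and the sentence about a self-loop respond that writes nothing. State the four distinguishing local patterns explicitly, so that the well-definedness of $\chi$ is visible. You can also merge the copy and return phases: respond with $(a',q')$ and move straight to the pending neighbour.
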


		Our agent function is stateless and relies entirely on the Quest Graph for memory. 
		Therefore, our construction is analogous to an unconventional TM variant. 
		In this model, the state is not held in a separate internal register but is encoded directly on the tape, 
		where the head can also see adjacent cells to determine its next action. 
		While similar unconventional models exist in the literature (\citet{arulanandham2007unconventional, cantone2016variant}), 
		we provide a proof sketch here tailored to our framework. A more detailed proof is provided in Appendix A.

		\begin{proof}[Proof Sketch]
			We represent the TM's tape as a linear Quest Graph where each node corresponds to a tape cell. 
			Each node's response stores the necessary TM configuration: a tape symbol, a state symbol, and a direction symbol (L or R). 
			The simulation proceeds in steps where the agent receives the local context, 
			takes the TM's current configuration from it, writes the next configuration by responding to the focus node, 
			and then shifts the focus as dictated by the TM's transition function. 
			This mechanism ensures that the current tape symbol is always stored in the focus node's response, 
			while the TM's current state is stored in the response of the previous node, which is one of the focus node's neighbors.

			Critically, the agent can unambiguously identify this previous node by inspecting the direction symbols. 
			Because a TM cannot skip cells and the tape is acyclic, the agent uses a simple rule: 
			the previous cell is the neighbor whose direction symbol is the opposite of the focus node's.
			\qedhere
		\end{proof}

	While the general Quest Graph framework is Turing complete, its unrestricted nature presents practical challenges. 
	The ability to rewrite any node's response creates a combinatorial explosion in the state space, which makes learning an effective agent policy difficult. 
	Furthermore, modifying an agent's history is often undesirable in applications where a clear and immutable record of action causality must be preserved for learning.

	This leads us to model a more constrained, yet common, pattern of agentic reasoning defined by two rules: agents can only discover nodes in a forward direction, and they cannot rewrite the responses of past nodes. 
	These restrictions naturally encourage a hierarchical reasoning approach. To solve a complex problem, the agent must decompose it into simpler sub-tasks, which we term \textit{\textbf{sub-quests}}. 
	This process expands a reasoning tree, where the agent adds a new child node (a sub-quest) to its current focus and uses the context from the entire branch to inform its next action. 
	This hierarchical pattern is prevalent in many agentic systems (\citet{ahn2022can, yao2023react, schick2023toolformer, song2023llm, tang2024enhancing}) 
	and is a cornerstone of hierarchical reinforcement learning (HRL) (\citet{barto2003recent, sutton1999between}).

	% Sequential inference pattens: 	ahn2022can, yao2023react, schick2023toolformer, song2023llm, tang2024enhancing
	% Batch inference patterns: min2019multi, khot2022decomposed, shen2024hugginggpt, wei2022chain, zhou2022least

\section{Quest Decision Processes}
\label{section:qdp}

	We introduce the \textit{\textbf{Quest Decision Process}} (QDP), a more constrained version of the Quest Graph framework designed specifically to model hierarchical reasoning. 
	The QDP operates under four key constraints:
	\begin{inparaenum}
		\item it imposes a strict structural rule where new nodes are linked exclusively as children of the current focus node;
		\item the discover action is separated into two distinct types. 
		A \textit{\textbf{discover input}} action creates a new node to store external information, 
		with its goal as the query and its response as the information received. 
		A \textit{\textbf{discover sub-quest}} action creates a new node with a specified goal and an empty response, 
		after which the agent marks the current focus node as a parent and immediately shifts focus to this new child;
		\item the agent can only ``respond then move" back to a parent node after completing its current quest; and
		\item the agent can only stop the execution when the focus node is the root node.
	\end{inparaenum}
	The bottommost diagram in Figure \ref{ex:questgraphs} illustrates a typical QDP rollout.

	Formally, a QDP is a variant of the partially observed Markov decision process (POMDP), 
	whose defining tuple includes a state space ($\mathcal{V}$), a hierarchical action space ($\mathcal{G}$), 
	an agent function ($\chi$), an observation space ($\Omega_{\text{qdp}}$), an observation function ($\mathcal{O}_{\text{qdp}}$).
	The $\mathcal{V}$ and $\mathcal{G}$ correspond to the sets of nodes and goals in the Quest Graph, respectively. 
	Specifically, $\mathcal{G}$ is a union of the original POMDP's action space and a set of discoverable, high-level sub-quest goals. 
	Similarly, the node response set, which we denote as $\mathcal{R}$, is a union of the original POMDP's observation space, 
	a set of sub-quest outcomes, and a special configuration symbol $\pi$ indicating the quest is a parent node of a sub-quest.
	The composition of this result set is a flexible design choice; for instance, it could include discrete outcomes like $\{\text{succeeded}, \text{failed}, \text{truncated}, \text{unfulfilled}\}$ or other forms of summary information. 
	Together, the goals and responses of the active Quest Graph nodes form the context for the agent's decisions.
	% The other components, like $R_{\text{qdp}}$, can be designed to facilitate hierarchical learning, for instance, through bonus rewards for invoking sub-quests or completing them.

	While the general QDP is Turing complete, mirroring the capabilities of infinite-context LMs (\citet{perez2021attention}), 
	our analysis focuses on a more constrained model: the \textit{\textbf{Finite Quest Decision Process}} (FQDP). 
	The FQDP restricts each node to a finite number of children. 
	Once this child limit is reached, the agent can only perform actions that do not create additional children for the focus node.
	These constraints fundamentally reduce the model's computational power.

	\begin{thm}
		A Finite Quest Decision Process (FQDP) is equivalent to a deterministic pushdown automaton (DPDA).
	\end{thm}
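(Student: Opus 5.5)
The equivalence is proved by two simulations, and in both the active branch of the FQDP tree plays the role of the DPDA stack. In an FQDP the agent only ever sees the focus node and its neighbours: its children and, through the parent mark $\pi$, its parent. Once focus returns from a child to its parent, that child's subtree can never be written to again. It only contributes its final response, which the parent still sees as a neighbour.

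\textbf{FQDP $\Rightarrow$ DPDA.} Fix a bound $K$ on the number of children per node. The local context of a node is then a tuple from a finite set $\mathcal{V}^{\le K+1}$. Consequently, the information the agent can ever read at a node $v$, apart from the parent's entry, ranges over a finite set $\Sigma_K$: the goal and response of $v$ together with the goals and responses of its (at most $K$) children.

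The DPDA stores one stack symbol per node on the root-to-focus path. The symbol for $v$ is the element of $\Sigma_K$ describing $v$ and its completed children so far. The topmost one or two symbols are cached in the finite control, since the agent also needs to see the parent node.

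Each action of $\chi$ is then simulated as follows.
\begin{enumerate}
\item A discover input action reads the next input symbol and updates the top stack symbol.
\item A discover sub-quest action marks the top symbol with $\pi$ and pushes a fresh symbol with an empty response.
\item A respond-then-move-to-parent action pops the top symbol and writes the returned response into the corresponding child slot of the new top symbol.
\end{enumerate}
Stopping is allowed only at the root, that is, only when the stack bottom is exposed. At that point the DPDA accepts according to the root's response. Because $\chi$ is a function, every transition is deterministic. The child limit is exactly what keeps the alphabet $\Sigma_K$ finite.

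\textbf{DPDA $\Rightarrow$ FQDP.} Given a DPDA, I would model each stack cell as a quest node whose goal encodes the stack symbol. I would allow at most two children per node: one input node and one sub-quest.

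The simulation rules are as follows.
\begin{enumerate}
\item A push is a discover sub-quest whose goal is the new symbol. The control state is carried along in the child, either in its goal or in its first response.
\item A pop is a respond-then-move to the parent. The response carries the new control state, and possibly a replacement for the parent's symbol.
\item Reading input is a discover input action.
\item $\varepsilon$-moves that do not change stack height are handled by updating the focus node's own response.
\end{enumerate}

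The main obstacle is that a node in this construction may need to be the top of the stack many times. It is pushed on and returned to repeatedly, which would create unboundedly many children and violate finiteness. I would try to fix this with a standard normal form for the DPDA. The goal is a form in which each stack cell, once it becomes top-of-stack again after a pop, is either popped itself or replaced. A replacement is simulated by the parent popping back to its own parent with an instruction to spawn a fresh sibling with the new symbol. Under this normal form each node spawns at most a constant number of children.

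A second subtlety is input consumption. Each input symbol needs its own discover-input node, so reading a long input in one stack context would also create unboundedly many children. The remedy is to read each symbol inside the current top node and immediately re-push into a fresh child, so that input reads are charged to fresh nodes. If this normal-form argument resists, I would fall back to an equivalence of recognised languages (the DCFLs) via a grammar-based encoding. For this direction I expect the bookkeeping to be the bulk of the work.
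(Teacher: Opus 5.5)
Your FQDP $\Rightarrow$ DPDA half is sound and is essentially the paper's argument. It keeps the finite local contexts in the control and stack alphabet, pushes the $\pi$-marked parent context on a sub-quest, pops and merges the child's response on completion, and inherits determinism from $\chi$. The gap is in the converse, at exactly the obstacle you identify. You realize a replacement by having the re-exposed node return to its own parent and ``spawn a fresh sibling''. This does not bound the number of children; it only moves the unbounded fan-out one level down, because the node for the cell beneath gains a new child for every replacement. Take the DPDA that pushes $A$ on $a$, pops it on $b$, and replaces the re-exposed bottom symbol by itself. It is already in your normal form, yet on input $(ab)^m$ the bottom cell is re-exposed $m$ times. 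Either it can never be replaced (the root has no parent to return to), or a dummy root ends up with $m$ children, which violates the FQDP child limit. So the claim that under this normal form each node spawns at most a constant number of children is false. Your rule for height-preserving $\varepsilon$-moves also fails. In a QDP a node cannot rewrite its own response and keep focus: apart from the $\pi$ mark set when it spawns a sub-quest, its response is written only when it completes and focus returns to its parent. Such a move must therefore also create a new node, or be eliminated in advance, as the paper does by using transitions that push or pop exactly one symbol.

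The missing idea is to let the tree grow in depth rather than breadth, which is what the paper's replica nodes $q'$ do. Suppose focus comes back to a node whose last child is a sub-quest, so its pushed symbol has just been popped. That node spawns a \emph{child} replica whose goal carries the node's own stack symbol together with the state returned in the sub-quest's response, and the simulation continues inside the replica. When a genuine pop occurs, every node whose last child is a replica forwards the returned state to its own parent, so the whole replica chain collapses. The first node reached whose last child is an ordinary sub-quest represents the next real stack cell, and it spawns a replica of its own. Each node then has at most an input child, a sub-quest child and a replica child, hence degree at most $4$ counting the parent, which is why the paper takes $C=4$. Your input-consumption remedy (``re-push into a fresh child'') is already this trick. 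It must also be applied to returns after a pop, and it needs the forwarding rule. Without forwarding, the return from a re-pushed copy would be treated as a revisit of the original cell instead of as its pop. The grammar-based fallback, as stated, does not supply an argument.
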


	The proof, detailed in Appendix \ref{app:fqdp-computability}, involves a bisimulation between the FQDP and a deterministic pushdown automaton (\citet{hopcroft2001introduction}). 
	This equivalence is intuitive: the FQDP's hierarchical, last-in-first-out process of managing sub-quests directly mirrors how a stack handles function calls in modern programming languages. 
	
	\subsection{Non-deterministic Finite Quest Decision Processes}

		The sub-quest mechanism in the FQDP encourages a deterministic, depth-first exploration of the reasoning tree. 
		The primary benefit of this determinism is that it simplifies policy learning, as the resulting execution trace is unambiguous. 
		However, another commonly adopted pattern is breadth-first exploration, where an agent creates multiple sub-quests or solutions
		in parallel and uses external mechanisms to evaluate and select among them (\citet{yao2024tree, hao2023reasoning, wang2022self}).

		We call this variant the \textit{\textbf{Non-deterministic Finite Quest Decision Process}} (NFQDP). 
		The NFQDP relaxes the FQDP's strict sub-quest discovery process by further splitting it into two separate actions:
		\begin{inparaenum}
			\item the discover sub-quest action now creates a new child node without immediately shifting focus to it, and
			\item a new pursuit action allows the agent to shift focus to any of the current focus node's incomplete neighbors.
		\end{inparaenum}

		As the name suggests, this model is computationally equivalent to a non-deterministic pushdown automaton (PDA) (\citet{hopcroft2001introduction}). 
		The separation of the discover and pursue actions allows the agent to traverse a \textit{\textbf{pre-built}} reasoning tree. 
		This is necessary for cases where a single quest configuration might lead to different reasoning paths, 
		a distinction that is indeterminable by the agent until the sub-quest is actively pursued. 
		A full proof is provided in Appendix \ref{app:NFQDP-computability}.

	Systems with a finite context that process information hierarchically fall into this category. 
	For example, ReAct is an FQDP system, as it interleaves reasoning and tool use in a depth-first manner (\citet{yao2023react}). 
	HRL agents also fit this model, where a high-level policy sets sub-goals for a low-level policy to achieve in a nested fashion (\citet{sutton1999between, barto2003recent}). 
	On the other hand, systems like Tree of Thoughts (\citet{yao2024tree}) and Reasoning via Planning (\citet{hao2023reasoning}), which explore alternative nested paths via external search mechanisms, can be modeled as NFQDPs.

	These results have a significant implication: any agentic system that operates under the constraints of either an FQDP or an NFQDP is not Turing complete. 
	Instead, its computational power is bounded, equivalent to that of a machine capable of recognizing context-free languages. 
	This reduction in computational power highlights the need for a model that is not only more expressive than a PDA 
	but that can also preserve an immutable history of actions, all while operating within a finite context window.

\section{Reference-Augmented Finite Quest Decision Processes}
\label{section:rqdp}

	Inspired by the retrieval mechanism for knowledge access in Retrieval-Augmented Generation (RAG) systems (\citet{lewis2020retrieval}), 
	we propose the \textit{\textbf{Reference-Augmented Finite Quest Decision Process}} (RQDP). 
	Typical RAG systems query an external knowledge base to enhance generative performance.
	Our mechanism, however, uses retrieval to access the agent's own past experiences—nodes that are inaccessible to the local context. 
	This provides access to an unbounded history while also ensuring the agent's rollout remains immutable, which is a desirable property for policy learning.

	\begin{figure}[ht!]
		\centering
		\[
		\begin{tikzcd}[row sep=0.5em, column sep=1.5em, every arrow/.append style={solid, -Triangle}]
			\questnode[\mathbf{F(5)}]{n-1}{} \arrow[r, dash] \arrow[dr, dash] & \questnode[F(4)]{n-1}{3} \arrow[r, dash] \arrow[dr, dash] & \questnode[F(3)]{n-1}{2} \arrow[r, dash] \arrow[dr, dash] &  \questnode[\cdots]{}{} \\
			& \questnode[F(3)]{n-2}{2} & \questnode[F(2)]{n-2}{1} & \questnode[\cdots]{}{} \\
			\\
			\questnode[F(5)]{}{} \arrow[r, "n-1"]  \arrow[rr, "n-2", out=-45, in=-135, looseness=0.5] & \questnode[F(4)]{}{3} \arrow[r, "n-1"]  \arrow[rr, "n-2", out=-45, in=-135, looseness=0.5] & \questnode[F(3)]{}{2} \arrow[r, "n-1"] & \questnode[\cdots]{}{} \\
		\end{tikzcd}
		\]
		\caption{
			An RQDP simulating the recursive computation of a Fibonacci number, $F(n) = F(n-1) + F(n-2)$. 
			Top: The execution trace (rollout), where the agent explores the dependency graph depth-first, 
			recursively computing the $F(n-1)$ term while retrieving the pre-computed value for the $F(n-2)$ term. 
			For termination, we assume an omitted input is discovered at each $F(n-1)$ node to count down to the base case.
			Bottom: The corresponding reference graph, where edges are defined by the reference-generating function, $\tau$. 
			The label in each node (e.g., $F(5)$) represents its reference; the corresponding node in the reference graph stores the cached response.
		}
		\label{fig:rqdp}
	\end{figure}
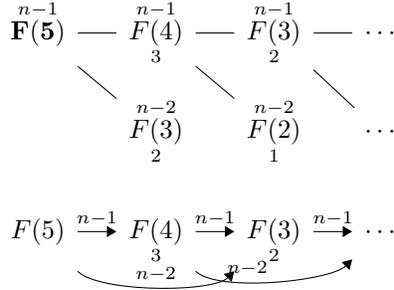

	To enable the retrieval mechanism, the system requires a way to distinguish nodes. 
	Each node is associated with a tag called a \textit{\textbf{reference}}, drawn from a non-finite set $\mathcal{T}$. 
	This set must be non-finite to accommodate the potentially unbounded history of the Quest Graph. 
	If the reference set were finite, an agent could theoretically encode the entire configuration of references into a single finite response, 
	which would not fundamentally increase the computational power beyond that of an FQDP. 
	This limitation is inherent to the stateless retrieval mechanisms in common RAG systems, 
	where retrieval relies solely on a finite query derived from the current context (\citet{gao2023retrieval, peng2024graph}).

	When a new node is created via any action, 
	the Quest Graph assigns it a reference using a deterministic \textit{\textbf{reference-generating}} function, $\tau: (\xi, g) \mapsto \xi'$. 
	This function maps the focus node's reference ($\xi \in \mathcal{T}$) and the new node's goal ($g \in \mathcal{G}$) to a new reference $\xi' \in \mathcal{T}$.

	The RQDP introduces a \textit{\textbf{retrieve}} action, defined as $\mathcal{V}^{\leq C} \to \mathcal{G}$. 
	When the agent selects this action, it outputs a goal. 
	The Quest Graph then triggers the creation of a new node, analogous to the discover action. 
	But instead of immediately initializing the new node's response to be empty, the Quest Graph tries to search the history for a matching node to copy its response. 
	A historical node is considered a match if:
	\begin{inparaenum}
		\item its reference matches that of the new node according to a matching operator, $\mathcal{T} \times \mathcal{T} \to \{\text{true}, \text{false}\}$; and
		\item it is the most recently updated node among all matches in the history.
	\end{inparaenum}
	If no match is found, the retrieval node's response is set to the empty symbol, $\varepsilon$.

	From the construction of the reference-generating function, the reference mechanism must maintain state 
	to ensure sufficient capacity for distinguishing references over an unbounded history. 
	Crucially, this design does not violate the agent function's finite and stateless assumption. 
	The unbounded reference tags and the stateful retrieval process are managed by the Quest Graph, 
	while the local context presented to the agent function remains finite, 
	effectively isolating the agent from this underlying complexity.

	The reference mechanism can be realized by implementing a separate reference graph that stores the current response for each unique reference. 
	In this implementation, the reference graph maintains its own head that tracks the reference associated with the current focus node and moves as the focus changes. 
	New nodes are added to this graph as they are created, with their connections defined by the $\tau$ function. 
	When the agent responds to a node in the main Quest Graph, the system updates the response of the corresponding node in the reference graph. 
	Figure \ref{fig:rqdp} illustrates an RQDP rollout, where the bottom diagram depicts the corresponding reference graph.

	% Unlike the soft attention mechanism in Transformers (\citet{vaswani2017attention}), 
	% which computes a weighted average over all items, our goal is to retrieve only the single best-matching node. 
	% This targeted approach allows for efficient retrieval, 
	% which can be implemented with a logarithmic time complexity relative to the number of nodes in the reference graph.

	\subsection{Computability of RQDP}

		The construction of the RQDP is conceptually similar to that of auxiliary PDAs and stack automata. 
		However, these models are typically constrained to bounded auxiliary memory, placing them in the $\text{LOGCFL}$ and $\text{PSPACE}$ complexity classes, respectively (\citet{sudborough1978tape, ginsburg1967stack, hopcroft1967nonerasing}). 
		Our model differs in two key aspects: its reference mechanism provides a more flexible access pattern, and it allows for an unbounded history. 
		These features are what elevate the RQDP to Turing completeness.

		\begin{thm}
			A Reference-Augmented Finite Quest Decision Process (RQDP) is Turing complete.
		\end{thm}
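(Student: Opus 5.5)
We simulate a Turing machine $M$ directly. The RQDP's reference graph will play the role of the tape, while the main Quest Graph is only an append-only log of the computation. We choose the reference set $\mathcal{T}=\mathbb{Z}$ and interpret a reference as a tape position. The reference-generating function is $\tau(\xi,g)=\xi+\delta(g)$, where $\delta(g)\in\{-1,0,+1\}$ is read off a direction component that we build into the goal alphabet, $\mathcal{G}\supseteq Q\times\Gamma\times\{L,S,R\}$. The matching operator is equality of references. Equipped this way, the retrieve action gives the agent "read the most recent symbol written at the cell to my left/right, or $\varepsilon$ (blank) if never written." Since the match is taken as the most recently updated node, later writes to a cell shadow earlier ones. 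The tape is therefore mutable in effect, even though every node of the rollout stays immutable.

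We organize the rollout as an FQDP-style chain, so that each node has a bounded number of children. A step-node at reference $\xi$ carries the current state $q$ in its goal and the current tape symbol in a retrieved child. One TM step proceeds as follows.

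The agent, focused on a step-node $(q,\xi)$, first issues a retrieve with direction $S$, which creates a child at reference $\xi$ whose response is the last symbol written there. The agent computes $\delta_M(q,a)=(q',b,D)$.

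It then creates a child with direction $S$ and responds to it with $b$. This is the write at position $\xi$, and it becomes the most recent match for $\xi$.

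Finally, it discovers a sub-quest whose goal encodes $(q',D)$. That new node gets reference $\xi\pm1$ and becomes the next step-node. In this way the recursion descends one level per TM step, each node keeps at most three children, and the reference bookkeeping is handled entirely by the Quest Graph, so $\chi$ stays finite and stateless.

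When $M$ halts, the agent responds upward through the chain with the accept/reject outcome until it reaches the root, where stopping is permitted. Input is supplied in one of two ways. It can be fed by discover-input actions that write the input symbols at references $1,\dots,n$ before the simulation begins. Alternatively, we can assume the standard reduction to computing on blank tape, which suffices for Turing completeness.

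We expect the main obstacle to be the write step under the constraint that an FQDP node responds only after "completing its quest." We must check that a write can be committed without losing the focus position. We would try the following: the write child is a sub-quest whose goal encodes $b$, and it immediately responds $b$ and returns to its parent. Since its reference is $\tau(\xi,S)=\xi$, this updates the reference-graph entry for $\xi$.

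Two further checks are needed. First, the "most recently updated" rule has to favour this write over older nodes at $\xi$, including the retrieval child that was just created. We handle this by ordering the retrieval before the write. Second, depth grows unboundedly, but that is allowed because only the per-node child count is finite.

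With those checks, a straightforward induction on steps shows the following invariant: after $k$ simulated steps, the focus node encodes $M$'s state and head position after $k$ steps, and for every position $p$, a retrieval at $p$ returns $M$'s tape content at $p$. This yields Turing completeness.
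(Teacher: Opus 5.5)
Your route differs from the paper's and the overall idea is sound, but the write step as described does not survive the retrieval rule.

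The step-node $N_k$ itself carries reference $\xi$. After $W_k$ responds $b$ and returns, you discover $N_{k+1}$ as a sub-quest of $N_k$, and by the QDP rules this marks $N_k$ as a parent. The paper treats that mark as a response write:
\begin{itemize}
\item the proof sketch says the push ``overwrites the parent's response with a parent mark ($\pi$)'';
\item the FQDP--DPDA construction phrases it as ``respond with $\pi$'';
\item the reference-graph implementation updates a reference's entry whenever a node with that reference is responded to.
\end{itemize}
So the $\pi$ on $N_k$ is written after $W_k$'s $b$, at the same reference. That makes $N_k$ the most recently updated match for $\xi$, and the next visit to $\xi$ retrieves $\pi$. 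Your invariant that a retrieval at $p$ returns the tape content at $p$ therefore fails as soon as the head revisits a cell. You guarded against shadowing by the retrieval child but not by the parent mark, and your input-loading sweep has the same defect.

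The paper's construction is immune to this. Cells left of the head live in the \emph{goals} of write and replica nodes, and $\tau$ only increments. A right-hand cell's symbol is recorded when the head leaves that cell leftward: the completing quest or replica node responds $(s',t')$. That response comes after every $\pi$ mark at that reference and is the last update before the cell is fetched again.

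Within your approach, the repair is to separate namespaces. Take $\mathcal{T}=\mathbb{Z}\times\{\mathrm{step},\mathrm{cell}\}$. Let $\tau$ send the retrieve and write children of a step-node at $(p,\mathrm{step})$ to $(p,\mathrm{cell})$, and the next step-node to $(p+\delta(D),\mathrm{step})$. Cell references are then carried only by leaves that never become parents, so neither $\pi$ marks nor the final accept/reject propagation can touch them. Your induction then goes through with $C=4$ (a parent plus three children).

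So repaired, your proof is more elementary than the paper's: there are no replica nodes and no unwinding, and the quest tree is just a log. It does, however, rely on a $\tau$ that can decrement. The paper's argument shows something slightly stronger: an increment-only reference generator on a one-way tape, combined with the FQDP stack, already suffices.
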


		\begin{proof}[Proof Sketch]
    		We show that an RQDP can simulate a standard Turing machine (TM) by using its reference mechanism to represent the TM's tape positions. 
			Because a standard TM has a one-way infinite tape, this simulation only requires a reference-generating function capable of incrementing the reference, 
			which corresponds to the rightward movement of the tape head.
			A full proof is provided in Appendix \ref{app:rqdp-computability}.

			The simulation represents the TM's tape in two parts. 
			The portion of the tape to the left of the head is maintained by the RQDP's stack-like execution, 
			where pursuing a sub-quest corresponds to a ``push." 
			During this push, the agent overwrites the parent's response with a parent mark ($\pi$), 
			which is why the tape symbol for this portion is stored directly in the state of the parent node. 
			The portion of the tape to the right of the head is managed by the reference mechanism, 
			where each reference corresponds to a unique tape cell, and its symbol is retrieved when needed.

			The agent's actions are as follows:
			\begin{inparaenum}
				\item To move right, the agent discovers a new child node, which is assigned an incremented reference ($\xi+1$) by the $\tau$ function, 
				extending the tape to the right. To read the symbol at this new position, it uses a retrieve action.
				\item To move left, the agent completes the current node and moves the focus to its parent, effectively ``popping" the stack. 
				The tape symbol for this newly revealed position is read directly from the parent node's state. 
				To allow for a subsequent rightward move from this ``revisited" parent, 
				a ``replica" node is used to prevent the formation of a cycle while preserving the finite context constraint.
			\end{inparaenum}
			\qedhere
		\end{proof}

		Additionally, we define a \textit{\textbf{Non-deterministic Reference-Augmented Finite Quest Decision Process}} (NRQDP) 
		by augmenting the NFQDP with the same reference mechanism used in the RQDP. 
		Since the NRQDP contains all the capabilities of the deterministic RQDP, it is also Turing complete.

		The reference mechanism enables both the RQDP and NRQDP to surpass the computational bounds of context-free languages while preserving execution traces for learning. 
		% By keeping track of a location in the reference space, this mechanism functions analogously to cognitive maps in the hippocampus (\citet{o1978hippocampus, o1971hippocampus, hafting2005microstructure}). 
		It appears that such a mechanism is a necessary component for a hierarchical agent to achieve at least a context-sensitive level of complexity.
		% This finding is particularly interesting, as human languages themselves are widely considered to be mildly context-sensitive (\citet{huybregts1984weak}).

	Having established a formal hierarchy of agentic automata corresponding to distinct levels in the formal language hierarchy, 
	we now shift our focus from pure computability to theoretical efficiency. 
	We investigate this by asking a key question: how efficiently, in terms of computational steps, 
	can each model in our hierarchy simulate a general computation graph?

\section{Simulating Computation Graphs}
\label{section:cg_simulation}

	A computation graph is a directed acyclic graph (DAG) where nodes represent operations and edges define the dependencies between them. 
	The graph's structure dictates the order of execution: a node can only be computed after all of its predecessors (nodes with edges pointing to it) have been computed.

	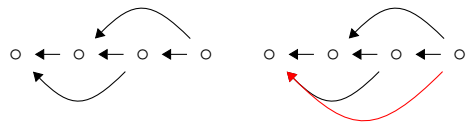
\begin{figure}[ht!]
		\centering
		\[
			\begin{tikzcd}[row sep=1em, column sep=1em, every arrow/.append style={solid, -Triangle}]
			\graphnode & \graphnode \arrow[l] & \graphnode \arrow[l] \arrow[ll, out=225, in=-45, looseness=1.5] & \graphnode \arrow[l] \arrow[ll, out=135, in=45, looseness=1.5]&
			\graphnode & \graphnode \arrow[l] & \graphnode \arrow[l] \arrow[ll, out=225, in=-45, looseness=1.5] & \graphnode \arrow[l] \arrow[ll, out=135, in=45, looseness=1.5] \arrow[lll, color=red, out=225, in=-45, looseness=1.5]&
			\end{tikzcd}
		\]
		\caption{
			Computation graph (left) and its max dependency version (right) with equal number of nodes.
		}
		\label{fig:computation_graphs}
	\end{figure}

	To analyze the worst-case efficiency of the Quest Graph and its variants, we introduce a standardized benchmark: the \textit{\textbf{Maximum-Dependency Computation Graph}} (MCG). 
	An MCG is a maximally connected DAG where, for a given ordering of nodes, each node depends on all preceding nodes. 
	We establish that any computation graph is a subgraph of an MCG (see Appendix \ref{app:bmcg} for a proof).

	This approach allows for a consistent analysis independent of any specific graph configuration. 
	An MCG with $N$ nodes has exactly $\frac{N(N-1)}{2}$ edges, providing a clear basis for comparing the simulation efficiency of our agentic automata. 
	Figure \ref{fig:computation_graphs} shows an example of a computation graph and its MCG counterpart.

	% Next, we address the context limitations of the agentic automata.
	% Because BMCG contains both ingoing and outgoing edges where the maximum in-degree of the BMCG is equal to the maximum out-degree.
	% Agent only needs to consider the ingoing edges, when its try to compute the node and 
	% outgoing edges when it tries to move from the node.
	% We will use the same notation $C$ for the agent context, but this is not the maximum degree of the BMCG.

	\begin{figure}[ht!]
		\centering
		\[
			\begin{tikzcd}[row sep=1em, column sep=0.5em, every arrow/.append style={solid, -Triangle}]
			&&\graphnode&& & &&\graphnode&&\\
			\graphnode \arrow[rru]&\graphnode \arrow[ru]&\graphnode \arrow[u]&\graphnode \arrow[lu]&\graphnode \arrow[llu] & &&\graphnode[\text{\footnotesize 3}] \arrow[u, color=red]&&\\
			&&&& & &\graphnode[\text{\footnotesize 2}] \arrow[ruu, color=red]&\graphnode[\text{\footnotesize 2}] \arrow[u , color=red]&&\\
			&&&& & \graphnode \arrow[ru]&\graphnode \arrow[u]&\graphnode \arrow[u]&\graphnode \arrow[lu]&\graphnode \arrow[lluu]
			\end{tikzcd}
		\]
		\caption{
			A node of degree five (left) and its k-ary tree of proxy nodes for $C=2$ (right).
			The numbers represent the proxy nodes and their potential increase in the response size.
			The red edges are the added proxy edges.
		}
		\label{fig:pyramid_proxy_nodes}
	\end{figure}
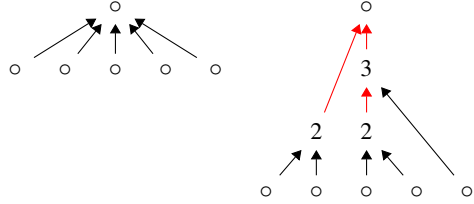

	A practical issue arises when an MCG's maximum in-degree (the number of dependencies for a node) exceeds the maximum context size, $C$. 
	Since $C$ can be much smaller than the total number of nodes $N$, an agent cannot process all dependencies of a highly connected node at once.
	To resolve this, we convert the graph into a structure compatible with the agent's limited context. 
	Any node with an in-degree greater than $C$ is decomposed into a k-ary tree of \textit{\textbf{proxy nodes}}, 
	where each new node in the k-ary tree has at most $C$ dependencies (Figure \ref{fig:pyramid_proxy_nodes}). 
	We call the resulting structure a \textit{\textbf{Bounded Maximum-Dependency Computation Graph}} (BMCG).

	\begin{lem}
		Given a Maximum-Dependency Computation Graph (MCG) with $N$ nodes, 
		there exists an equivalent Bounded Maximum-Dependency Computation Graph (BMCG) composed of $O(N^2)$ nodes and $O(N^2)$ edges.
	\end{lem}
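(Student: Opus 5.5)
The construction is explicit: replace each high in-degree node of the MCG by a $C$-ary tree of proxy nodes, and then count nodes and edges. Fix the ordering $v_1,\dots,v_N$ of the MCG, so that $v_i$ has in-degree $d_i=i-1$ (its predecessors $v_1,\dots,v_{i-1}$). Nodes with $d_i\le C$ are left unchanged.

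For every node with $d_i>C$, build a $C$-ary tree. Its $d_i$ leaves are the original predecessor edges, its root is $v_i$ itself, and each internal proxy node aggregates at most $C$ children. Concretely, group the $d_i$ predecessors into blocks of size at most $C$ and create one proxy node per block. Then group those proxies into blocks of size at most $C$, and repeat until at most $C$ items remain; these become the direct dependencies of $v_i$. Equivalence means the BMCG computes the same values at the original nodes. This holds because each proxy node's operation is defined as a partial evaluation/aggregation (or simply a forwarding/packing) of its children's outputs, and $v_i$'s operation is then applied to the collected values. Acyclicity is preserved because every proxy edge points from nodes earlier in the order into the tree of $v_i$, and all proxies are fresh.

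Next comes the counting. With $C\ge 2$, the number of internal nodes of a tree with $d_i$ leaves and every internal node having at least two children (except possibly the last partial block at each level) is at most $d_i-1$. To keep this clean I would bound it by a geometric series, $\sum_{\ell\ge1}\lceil d_i/C^\ell\rceil \le d_i/(C-1)+\log_C d_i+1 = O(d_i)$. The edges inside the tree number one per non-root node plus the $d_i$ leaf edges, so they are also $O(d_i)$. Summing over $i$ gives a total node count of $N+\sum_i O(i)=O(N^2)$, and an edge count of the same order. This matches the $\frac{N(N-1)}{2}$ edges of the MCG noted earlier.

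The main obstacle is the degenerate case $C=1$, where a "$C$-ary tree" is a chain. There the number of proxies per node is $d_i-1$, which is still $O(d_i)$, so the bound survives, but the geometric-series argument must be replaced by this direct count. I would handle $C=1$ as a separate case. A secondary point is justifying "equivalence" given finite response sets: proxies may need larger responses, as Figure \ref{fig:pyramid_proxy_nodes} hints. The lemma only concerns graph size, so I would state explicitly that equivalence is meant at the level of dependency structure and reachability. Concretely, $v_j$ reaches $v_i$ in the BMCG iff $j<i$, and the proxy edges are the only added ones. Response-size growth would be deferred to the later complexity analysis.
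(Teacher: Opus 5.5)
Your argument is correct for $C\ge 2$ and is essentially the paper's. Both bound the number of proxies attached to each node linearly in its in-degree $d_i=i-1$, then sum over the arithmetic progression of in-degrees to get $O(N^2)$ nodes. Both count edges as the original $\frac{N(N-1)}{2}$, redirected into the trees, plus one outgoing edge per proxy. The only difference is in the per-node count: the paper uses the exact minimum $k(d)=\lfloor (d-2)/(C-1)\rfloor$ and sums it in closed form. You instead bound a greedy level-by-level grouping by $d/(C-1)+\log_C d+1$. Both are $O(d)$, so nothing is lost.

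One correction: your treatment of $C=1$ is wrong. If every node has in-degree at most $1$, a chain of proxies has a single entry point, so it can absorb only one of the $d_i\ge 2$ predecessors of $v_i$. More generally, any tree in which every node has at most one child (dependency) is a path with exactly one leaf. So for $C=1$ the proxy construction does not exist at all; it does not simply cost $d_i-1$ proxies. The paper's formula has $C-1$ in the denominator and so tacitly assumes $C\ge 2$. You should state that assumption explicitly instead of claiming the bound survives at $C=1$.
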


	\begin{proof}[Proof Sketch]
		The quadratic complexity arises from summing the conversion cost across all nodes. 
		For any single node, the number of required proxy nodes is linear in its ``excess degree"—the amount by which its in-degree exceeds the context size $C$. 
		In an MCG with $N$ nodes, the in-degrees of the nodes increase linearly, forming an arithmetic progression. 
		The total number of added nodes is therefore the sum of a linear sequence of these excess degrees, which results in a quadratic total, $O(N^2)$. 
		Since each added proxy node also corresponds to a constant number of new edges, the total number of edges is also quadratic. 
		A full numerical analysis is provided in Appendix \ref{app:bmcg}.
		\qedhere
	\end{proof}

	% This can lead to a linear increase in the required information space, $\lceil \frac{N-1}{C} \rceil = O(N)$.

	Decomposing a node has a potential cost, as proxy nodes must carry intermediate responses. 
	The k-ary tree structure is designed to ensure this information flow is well-balanced among the new nodes. 
	However, the required response size depends entirely on the nature of the original computation.

	In the most extreme case, the response size can grow significantly because one could trivially define a single node whose computation requires knowledge of the entire preceding graph.
	For a non-compressible function, a proxy node might simply concatenate the responses from its children, leading to a linear increase in response size. 
	In contrast, many common operations are highly compressible. 
	A summation requires only logarithmic space for a running sum, 
	while operations like boolean logic or finding a maximum value require only constant space. 
	Therefore, the response size increased by this transformation has a lower bound of $\Omega(1)$.

	\subsection{Complexity Analysis}

	We now analyze the time complexity required for each Quest Graph variant to simulate a computation graph. 
	To do so, we first establish the computational cost of the agent function. 
	Since the agent operates on a finite local context, we assume it computes its output in constant time, $O(1)$.

	For the unrestricted Quest Graph, the BMCG can be pre-built within the Quest Graph, with the root node as the focus node.
	To perform the computation, the agent can traverse the graph in a depth-first manner until it finds a node whose dependencies have all been computed. 
	The time complexity for traversing the graph is bounded by the number of nodes, because for each non-leaf node, the agent must visit and then backtrack from it.

	\begin{cor}
		A Quest Graph can simulate a BMCG with $O(N^2)$ operations.
	\end{cor}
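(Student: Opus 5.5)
The plan is to combine the preceding Lemma with an amortized count of agent actions. By the Lemma, the MCG with $N$ nodes converts to an equivalent BMCG with $M = O(N^2)$ nodes and $O(N^2)$ edges. Every node of this BMCG has in-degree at most $C$, so each node's dependencies fit in one local context. Since the agent function runs in $O(1)$ time per step, it suffices to show that the total number of agent actions is $O(M + |E|)$.

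\textbf{Embedding.} I would pre-build the BMCG inside the Quest Graph as undirected edges. I would encode edge orientation in the goals, for example by giving each node a goal that marks it as an operation or proxy type. The direction of an edge can then be read from the responses, as the Turing completeness proof did with direction symbols. I would add a root node attached to the sinks of the BMCG and place the initial focus on it.

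\textbf{Traversal.} I would specify a depth-first traversal driven only by the responses. At a node $v$, the agent inspects up to $C$ in-neighbours.
\begin{itemize}
\item If some dependency is still incomplete ($\varepsilon$), the agent responds with a marker saying ``visiting, came from this side'' and moves focus to that dependency.
\item If all dependencies are complete, the agent computes $v$'s response, which is one respond-then-move action, and moves focus back toward the node it came from.
\end{itemize}
The backtrack direction is recovered exactly as in the Turing machine simulation: the caller is the unique out-neighbour carrying a ``visiting'' marker directed at $v$. Because the BMCG is acyclic, a node being visited cannot be re-entered from below, so there are no cycles along the active path.

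\textbf{Counting.} Each edge $(u,v)$ is traversed downward at most once, namely when $u$ is first discovered incomplete from $v$. It is traversed upward at most once, when $u$ completes and returns. Every further visit to a completed node costs $O(1)$ and is charged to the edge it came along. Each node is completed exactly once. Hence the total number of actions is $O(M + |E|) = O(N^2)$. Pre-building the graph costs one discover action per node, which is also $O(N^2)$ if it is counted at all.

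\textbf{Main obstacle.} The delicate step is returning to the correct caller when a node has several completed-and-revisiting parents. A node shared by many dependents could carry stale markers from an earlier caller. To handle this, I would make the ``visiting'' marker live only on nodes on the current DFS path. When a node returns, it overwrites its own response with its computed value, so its marker disappears. At any moment the caller of $v$ is then the unique out-neighbour whose response is a ``visiting'' marker pointing at $v$. When $C$ restricts which neighbours are visible, the pointer can be stored as an index into $v$'s bounded in-neighbour set.

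I would check that this invariant is preserved by each of the two rules above. Once the invariant holds, the $O(N^2)$ bound follows from the Lemma.
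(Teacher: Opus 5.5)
Your proposal is correct and follows the paper's route: pre-build the BMCG in the Quest Graph with the focus at a root, traverse it depth-first so the number of agent steps is linear in the size of the BMCG, and conclude $O(N^2)$ from the lemma. The counting differs only cosmetically: the paper charges one visit and one backtrack to each node, while you charge nodes plus edges.

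The paper simply asserts that backtracking is possible, so your marker-based return rule is extra detail. Like the paper, though, it tacitly assumes that $v$ can see and recognize its caller within its $C$-bounded local context. That is not automatic, for two reasons: BMCG nodes can have unbounded out-degree, and an index stored at the caller refers to the caller's own neighbour ordering.
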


	% This quadratic asymptotic bound is achievable in systems with an actual moving agent directly traverses and manipulates information in a spatial environment. 
	% This distinguishes the unrestricted Quest Graph from the reference-augmented variants (NRQDP and RQDP).
	% The latter systems align more closely with real-world stationary systems that rely on pointer mechanisms to access memory from a separate storage component,
	% such as retrieval-augmented language models (\citet{lewis2020retrieval, gao2023retrieval}).

	Unlike the unrestricted Quest Graph, an NRQDP cannot directly instantiate a BMCG due to its single-parent constraint. 
	Instead, the NRQDP must embed the BMCG's exploration tree within its own quest graph, 
	a process that would typically entail a number of nodes exponential in the size of the underlying MCG. 
	However, the reference mechanism allows the NRQDP to share a node's response across different branches of this exploration tree. 
	This is achieved by assigning a unique reference to each unique node from the initial MCG, 
	which substantially reduces the number of nodes that must be computed.

	The deterministic RQDP operates differently, as it cannot navigate through a pre-built exploration tree due to its more restricted action set. 
	Instead, a node's dependencies must be provided to the agent through discover input actions.
	This forces the agent to discover the graph's structure dynamically during execution while using its reference mechanism to retrieve previously computed nodes.

	For both the NRQDP and RQDP, the simulation proceeds as a memoized, depth-first traversal. 
	To compute any given node, the agent retrieves each of its dependencies using their references. 
	If a retrieved dependency has an empty response, indicating it has not yet been computed, 
	the agent recursively invokes the simulation on that dependency first. 
	To ensure all dependencies are resolved, this process requires the agent to check every edge of the underlying MCG.
	Since the total complexity is the product of the number of edges and the cost of each retrieve action, we can conclude that:

	\begin{cor}
		An NRQDP and RQDP can simulate a BMCG with $O(N^2 \log N)$ operations.
	\end{cor}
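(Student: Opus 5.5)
The plan is to make the memoized depth-first traversal described above concrete and charge its cost to the edges of the underlying MCG. I count the work as the number of agent steps times the cost of each step. Every step is constant-time except a retrieve action, which I charge $O(\log N)$.

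First I fix the reference scheme. Each node $u_i$ of the original MCG ($i=1,\dots,N$) gets a reference $\xi_i$ with $O(\log N)$-bit encoding, for instance the index $i$ itself. The reference graph is kept as a balanced search structure keyed by these references. Any lookup or update under the matching operator, taking the most recently updated match, then costs $O(\log N)$.

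For the RQDP, the dependencies of $u_i$ are supplied through discover input actions. For the NRQDP, they sit in the pre-built exploration tree and are reached by pursuit actions. Either way, the agent gathers them in groups of at most $C$, following the proxy-node k-ary tree of the BMCG from Lemma~\ref{app:bmcg} (the BMCG lemma above). This keeps every local context inside $\mathcal{V}^{\leq C}$, and each proxy node is processed in $O(1)$ agent steps.

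Next is the traversal. To compute $u_i$, the agent issues one retrieve action per in-edge $(u_j,u_i)$. If the returned response is nonempty, the value is used directly. If it is $\varepsilon$, the agent discovers a sub-quest for $u_j$ and recurses, then responds and moves back to the parent. When the sub-quest completes, its response is written to the reference graph under $\xi_j$.

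By induction on the order in which nodes finish, each $u_j$ is computed exactly once. Its first completed sub-quest stores a nonempty response, so every later retrieve hits. The recursion is acyclic because the MCG is a DAG, which also bounds the depth of the stack-like execution by $N$.

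For the accounting, each in-edge of each MCG node triggers at most one retrieve, giving $\sum_i (i-1) = N(N-1)/2$ retrieves. Each node triggers one sub-quest, plus $O(1)$ steps for discovering, responding and moving. The proxy nodes add $O(N^2)$ steps by the BMCG lemma. The total is $O(N^2)\cdot O(\log N) + O(N^2) = O(N^2\log N)$.

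The main obstacle is justifying the $O(\log N)$ charge per retrieve, since the paper leaves the matching operator abstract. I would first fix a concrete instance: references as integers in $[1, O(N^2)]$, which covers proxy nodes too, and a balanced binary search tree over the history. The retrieve is then a predecessor or equality query in $O(\log N^2)=O(\log N)$ time. I also need the reference-generating function $\tau$ to produce the reference $\xi_j$ from the focus reference and the finite goal. I handle this by having the discover input or the exploration-tree node carry the index offset, accumulated in $O(1)$ per step through the proxy tree. If that proves awkward, the fallback is to let the input node's goal encode $j$ directly and have $\tau$ copy it.
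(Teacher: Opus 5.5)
Your proposal is correct and follows essentially the paper's route. Both use a memoized depth-first traversal that issues one retrieve per MCG edge, giving $N(N-1)/2$ retrieves at $O(\log N)$ each plus $O(N^2)$ remaining steps; the paper simply assumes the logarithmic retrieve cost that you justify with a balanced search tree. Your $O(\log N)$ reference updates on respond actions, which your accounting calls constant-time, are absorbed the same way, since there are only $O(N^2)$ steps.

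Two small repairs are needed. First, drop the fallback of encoding $j$ in a goal, since $\mathcal{G}$ is finite and independent of $N$; instead, accumulate the index through $\tau$ along the proxy-tree path. Second, for the NRQDP, discover the dependency structure dynamically as in the RQDP (discover followed by pursue) rather than traversing a fully unfolded pre-built exploration tree. An NFQDP quest can only be completed once all of its children are complete, so a fully unfolded tree would force a visit to every one of its exponentially many pre-built nodes.
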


	Here, we assume the retrieve action has a logarithmic time complexity relative to the number of active references, $O(\log N)$. 
	This reflects the inherent cost of finding a matching reference within the unbounded set $\mathcal{T}$. 
	In practice, the reference mechanism can be implemented with a key-value store, which typically entails a logarithmic access time (\citet{kleppmann2017designing}). 
	Alternatively, if implemented via a separate reference graph (as described earlier), 
	the logarithmic cost is incurred not during reference generation, but during the respond then move action 
	when searching for the correct parent node's reference to return to. 
	Both implementation strategies yield the same asymptotic complexity.

	An NFQDP and FQDP traverse the graph dynamically, similar to their reference-augmented counterparts, 
	but they lack the ability to refer to past results. 
	Without this capability, the agent must re-compute shared dependencies each time they are needed. 
	This re-computation is necessary for the constant response size lower bound, 
	which prevents them from storing the results of large subtrees directly in the response field.

	\begin{cor}
		An NFQDP and FQDP can simulate a BMCG with $O(2^N)$ operations.
	\end{cor}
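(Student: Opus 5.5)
We prove the corollary by having the FQDP (and hence the NFQDP, which contains all FQDP behaviour) run the same depth-first traversal used for the reference-augmented variants, but with no memoization. Concretely, to compute node $i$ of the MCG (nodes ordered $1,\dots,N$, node $i$ depending on all $j<i$), the agent works at a quest node whose goal encodes ``compute node $i$''. Because goals range over a finite set, the index $i$ cannot be stored directly. Instead, as in the RQDP construction, the dependencies are delivered to the agent by discover input actions. Each dependency is then pursued as a discover sub-quest. The combination over the in-edges follows the proxy $k$-ary tree of the BMCG from the earlier Lemma, so the focus node never needs more than $C$ children plus $O(1)$ bookkeeping children. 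This respects the finite-children constraint of the FQDP. When a sub-quest finishes, it writes its constant-size response (the $\Omega(1)$ lower-bound regime assumed in the text) and returns to its parent, where it is consumed.

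\textbf{Correctness.} We argue by induction on $i$. Each sub-quest for node $j$ returns the correct value of node $j$ before its parent reads it. The stack discipline of the FQDP (the DPDA equivalence theorem) guarantees that the parent's partial aggregate, stored in its response or in its $O(1)$ pending proxy children, is untouched while the child is being computed.

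\textbf{Cost recurrence.} Let $T(i)$ be the number of agent steps needed to compute node $i$ from scratch. Because no result can be shared across branches, we get
\[
T(i) \;=\; \sum_{j<i} T(j) \;+\; O\!\left(\tfrac{i}{C}\right)\cdot O(1) \;+\; O(1).
\]
The middle term counts the proxy nodes and discover-input steps that realise the in-edges of node $i$, one constant-cost action per BMCG node or edge. Subtracting consecutive instances gives $T(i)-T(i-1)=T(i-1)+O(1)$. Hence $T(i)\le 2T(i-1)+O(1)$, which yields $T(N)=O(2^N)$, and summing over the output node or root gives the same bound. The BMCG overhead per level is only polynomial, so it is absorbed: $\sum_i O(i)\,2^{N-i}=O(2^N)$. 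For the NFQDP the same trace is realisable, because discover sub-quest followed immediately by pursue reproduces the FQDP action, so the bound carries over.

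\textbf{Main obstacle.} The delicate step is checking that the recursion fits the FQDP's finite-goal, finite-children, constant-response constraints without secretly encoding unbounded indices. Indices must arrive only through discover input nodes. Each parent must keep just $O(1)$ live information while children run, and the proxy tree provides exactly this. I would first try to mirror the RQDP appendix construction with every retrieve action replaced by a full recursive sub-quest. This replacement is precisely what converts the $O(N^2\log N)$ memoized count into the $2^{N}$ recurrence. The bound stated is an upper bound, so no lower-bound argument is needed here.
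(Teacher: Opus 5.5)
Your proposal is correct and follows the paper's argument. Recomputation without memoization, with proxy-tree overhead linear in the in-degree, gives $T(i)=\sum_{j<i}T(j)+O(i)$, which is the paper's upper recurrence $S^+(N)=N+\sum_{i=1}^{N-1}S^+(i)=2^N-1$. The paper also states a matching lower recurrence $S^-(N)=2^{N-1}$ for this strategy, which, as you note, is not needed for the $O(2^N)$ claim.

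One small correction to your implementation sketch: in a QDP the parent's response is overwritten by the parent mark $\pi$ when it spawns a sub-quest. So the partial aggregate must live in the completed children's responses (or in a replica's goal), not in the parent's own response.
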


	\begin{proof}
		The cost of computing a dependency subtree has two components: 
		the computation within the node's immediate proxy tree and the recursive computation of its underlying dependencies. 
		The cost of the proxy tree is proportional to its number of proxy nodes, which in turn is proportional to the difference between the node's in-degree and the context size $C$.

		Let $S(N)$ be the number of operations to compute a node with $N - 1$ dependencies. 
		We can establish both a lower and an upper bound for this cost with a recurrence relation. 
		The lower bound is given by $S^-(N) = 1 + \sum_{i=1}^{N-1} S^-(i)$, which solves to $S^-(N) = 2^{N-1}$. 
		The upper bound is given by $S^+(N) = N + \sum_{i=1}^{N-1} S^+(i)$, which solves to $S^+(N) = 2^N - 1$.
		$S^-(1) = S^+(1) = 1$.

		Since both the lower and upper bounds are $O(2^N)$, the size of the subtree to be re-computed is bounded by $O(2^N)$.
		\qedhere
	\end{proof}
	
	Finally, a standard LM is fundamentally more limited. 
	Unlike the Quest Graph variants that provide structured, graph-like memory, an LM operates on a simple, linear context window of a fixed size, $C$. 
	To simulate a BMCG, the LM must individually store previously computed intermediate results within this single window. 
	As the dependency size increases, the number of intermediate results that must be stored scales with the height of the k-ary proxy tree,
	a logarithmic factor of $N$, eventually exceeding $C$.

	\begin{cor}
		A standard LM cannot simulate an arbitrary BMCG when restricted to a constant response size lower bound.
	\end{cor}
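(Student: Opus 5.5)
The plan is to combine Theorem~\ref{thm:lm-computability} with a counting (pigeonhole) argument on the number of distinct partial results the LM must hold at one time. By Theorem~\ref{thm:lm-computability}, a finite-context LM is equivalent to an FSM. Its total configuration is the contents of a window of size $C$ over a finite vocabulary, so the number of reachable configurations is bounded by a constant $K$ that does not depend on $N$. The goal is to exhibit, for every sufficiently large $N$, a BMCG whose correct simulation forces the LM to distinguish more than $K$ configurations at some moment.

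\textbf{Step 1: build a hard instance.} I will pick a BMCG family obtained, via the lemma above, from MCGs in which every node's operation is a non-degenerate function of all its inputs. Boolean XOR or a parity-style combination is the natural choice: its constant response size meets the $\Omega(1)$ lower bound, so the blow-up cannot be blamed on response growth. The k-ary proxy tree for a node of in-degree $N-1$ has height $h = \Theta(\log_C N)$.

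\textbf{Step 2: count the pending values.} Any evaluation order for such a tree must, at some point, hold the completed responses of a partially evaluated subtree on every level of a root-to-leaf path. This is the standard pebbling argument for complete trees, which gives at least $h$ simultaneously live values. Because the LM has no external memory, all these values, together with enough bookkeeping to know which proxy is being computed next, must be encoded in the window. Each value is an independent bit of constant size, so the number of distinct live-value patterns that must lead to different future behaviour is at least $2^{h}$. That count is unbounded in $N$.

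\textbf{Step 3: pigeonhole.} Once $2^h > K$, two input assignments that give different live-value patterns drive the LM into the same window configuration. From that point on the LM behaves identically on both. I will then choose the remaining inputs so that the final outputs must differ, which is possible because XOR is sensitive to every input. This gives a contradiction. As a sanity check, this is just the FSM versus pebbling-height bound: a regular device cannot evaluate arbitrarily deep formulas in which every input matters.

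The main obstacle is Step 2. I have to rule out the LM exploiting a clever evaluation order, or re-reading the inputs from external input, to avoid holding $h$ values at once. My first approach is to insist that each input is supplied only once, as in a streaming model, and to appeal to the pebbling lower bound for complete $C$-ary trees. If re-reading inputs is allowed, I would instead argue via the "constant response size" constraint in the statement: recomputation still has to track position within the tree. Position tracking needs $\Omega(\log N)$ bits, so an FSM with constant $K$ states still fails.
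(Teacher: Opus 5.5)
Your Step 2 is the paper's entire argument. The paper asserts that the LM ``must individually store'' the pending intermediate results, that their number grows with the height $h=\Theta(\log N)$ of the proxy tree, and that this eventually exceeds the fixed window $C$. The constant-response-size assumption makes each stored result cost at least a constant amount of the window. The gap is in your attempt to turn this into an information-theoretic pigeonhole: it breaks on the instance you chose. A pebbling lower bound constrains strategies that hold node values as separate items. It gives no lower bound on the configurations of a device that may keep a compressed summary instead. For XOR such a summary exists. XOR is associative and commutative, so every bracketing of the proxy tree evaluates to the parity of its inputs, and a two-state automaton reading the inputs in order computes this for every $N$ and every $h$. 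This has four consequences:
\begin{itemize}
\item Your claim that $2^h$ live-value patterns must lead to different future behaviour is false: two patterns with equal parity agree under every continuation.
\item The step ``choose the remaining inputs so that the final outputs must differ'' cannot be carried out for such a pair. Sensitivity to each single input is not a fooling-set property.
\item Your sanity check is false: balanced parity formulas of any depth are regular.
\item The position-tracking fallback does not rescue it, since parity needs no position information.
\end{itemize}
The same collapse happens for any associative operation on a finite set, by multiplying left to right in the finite monoid.

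There are two ways to close the gap. The first is to adopt the paper's modelling assumption explicitly: node operations are opaque, so every pending proxy response occupies its own $\Omega(1)$ slot of the window until its parent consumes it. Then Step 2 alone yields $\Theta(\log N) > C$ live items and Step 3 is unnecessary, though the conclusion is only as strong as that assumption.

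The second is to keep the pigeonhole but use a non-associative operation such as NAND. Count prefixes ending at a leaf whose root-to-leaf path alternates right and left turns. At the right-turn levels the completed left siblings are pending. At the left-turn levels the right subtrees are still unseen, and the suffix can make each evaluate to $0$ (forcing that node to $1$) or $1$ (making that node a negation). Let prefix $j$ have its first pending $0$ at the $j$-th right-turn level. For $j'>j$, take the suffix that sets the unseen subtree just below that level to $0$ and all unseen subtrees above it to $1$; it yields different root values for prefixes $j$ and $j'$. This gives $\Omega(h)$ pairwise distinguishable prefixes of equal length. That is far fewer than your $2^h$, but it is unbounded in $N$, which is all the pigeonhole against a constant number of configurations needs.
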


	This concludes our analysis of the Quest Graph and its variants in the context of simulating computation graphs.

\section{Conclusion}

	In this work, we introduced the Quest Graph, a theoretical framework for analyzing the computational capabilities of agentic systems with finite context windows. 
	We demonstrated that architectural variants of this framework, designed to model common reasoning patterns, correspond to distinct levels of the formal language hierarchy:
	\begin{inparaenum}
		\item The unrestricted Quest Graph for graph traversal is Turing complete.
		\item The Finite Quest Decision Process (FQDP) and Non-deterministic FQDP (NFQDP) for hierarchical reasoning
		are equivalent to deterministic and non-deterministic pushdown automata, respectively.
		\item The Reference-Augmented FQDP (RQDP) and Non-deterministic RQDP (NRQDP) are also Turing complete,
		provided they are equipped with a reference mechanism that maintains state.
	\end{inparaenum}
	This establishes a clear link between an agent's design and its fundamental expressive power.

	Furthermore, we analyzed the theoretical efficiency of these models when simulating general computation graphs and established the following time complexities:
	\begin{inparaenum}
		\item The unrestricted Quest Graph can simulate a bounded maximum-dependency computation graph (BMCG) in $O(N^2)$ operations.
		\item The RQDP and NRQDP can simulate a BMCG in $O(N^2 \log N)$ operations.
		\item The FQDP and NFQDP can simulate a BMCG in $O(2^N)$ operations.
		\item A standard LM cannot simulate an arbitrary BMCG when restricted to a constant response size.
	\end{inparaenum}
	A summary of our findings is presented in Table \ref{tab:summary}.

	% This analysis has focused on non-probabilistic agent functions.
	% We distinguish between formal non-determinism and the randomized, pseudo-non-deterministic behavior common in practice (e.g., temperature sampling). 
	% More research is needed to understand how these practical sampling techniques affect the computability of the Quest Graph and its variants.

	% This distinction does not alter the computational power of Type-0 (Turing machine) and Type-3 (finite state automata) models, as their deterministic and non-deterministic versions are known to be equivalent (\citet{hopcroft2001introduction, rabin1959finite}). 

\section{Declaration of LM Usage}

	In the creation of this manuscript, we utilized AI language models for assistance \textit{only with final text polishing and literature searches}.
	This decision was guided by our conviction that the process of intellectual discovery is itself a rewarding and essential part of the scientific endeavor.
	We affirm that all core intellectual contributions, conceptualization, and the ultimate responsibility for this work rest entirely with the human authors.

	\bibliography{main, lm, rl, fir}
	\bibliographystyle{plainnat}

	% \clearpage
	% \input{checklist.tex}

	\clearpage
	\appendix
	\section{Computability of Quest Graph}
\label{app:qg-computability}

	\begin{defn}
		A \textbf{Quest Graph} is a dynamic, undirected graph constructed from a finite set of components. These components are:
		\begin{itemize}
			\item A finite set of all possible goals, $\mathcal{G}$.
			\item A finite set of all possible responses, $\mathcal{R}$. The response set may include a special empty symbol, $\varepsilon$, to signify an \textbf{incomplete} state.
		\end{itemize}
		A node configuration is a tuple $(g, r)$, where $g \in \mathcal{G}$ and $r \in \mathcal{R}$. 
		Both $\mathcal{G}$ and $\mathcal{R}$ are finite sets, ensuring that the set of all possible node configurations, $\mathcal{V} = \mathcal{G} \times \mathcal{R}$, is also finite.

		A specific instance of a Quest Graph is a tuple $\mathcal{Q} = (V, E)$, where:
		\begin{itemize}
			\item $V$ is the set of nodes currently instantiated in the graph. 
			Each node $v \in V$ has a configuration from the set $\mathcal{V}$.
			\item $E$ is the set of undirected edges between the nodes in $V$.
		\end{itemize}
		The graph maintains a reference to a specific \textbf{focus node}, $v_f \in V$.

		A \textbf{local context} consists of the focus node and its neighbors up to a fixed capacity $C \in \mathbb{Z}^+$. 
		The set of all possible local contexts is denoted by $\mathcal{V}^{\leq C} = \bigcup^C_{k = 0} \mathcal{V}^{1+k}$.
		Let $\mathcal{E}$ represent a set of edge pointers referring to one of the neighbors of, or a self-loop on, the focus node.
		The set of all possible edge pointers is denoted by $\mathcal{E}^{\leq C} = \bigcup^C_{k = 0} \mathcal{E}^{1+k}$.

		The agent function, $\chi$, takes a local context and performs one of three possible actions:
		\begin{itemize}
			\item Discover node: Returns a new node and a set of edge pointers that point toward neighbors to be connected to the new node.
			This output is an element of $\mathcal{V} \times \mathcal{E}^{\leq C}$, where $\mathcal{E}^{\leq C}$ is the set of edge pointers.
			\item Respond and move focus: Returns a response for the current node and an edge pointer indicating the next focus node. 
			This output is an element of $\mathcal{R} \times \mathcal{E}$.
			\item Stop: Returns nothing and halts the computation.
		\end{itemize}
	\end{defn}

	Let a Turing machine be defined by the 7-tuple $(S_{\text{tm}}, \Gamma_{\text{tm}}, \Sigma_{\text{tm}}, \delta_{\text{tm}}, s^0_{\text{tm}}, b_{\text{tm}}, F_{\text{tm}})$. 
	For our simulation, the key components are the finite set of states ($S_{\text{tm}}$), the tape alphabet ($\Gamma_{\text{tm}}$), the initial state ($s^0_{\text{tm}} \in S_{\text{tm}}$), 
	the transition function $\delta_{\text{tm}}: S_{\text{tm}} \times \Gamma_{\text{tm}} \to S_{\text{tm}} \times \Gamma_{\text{tm}} \times \{L, R\}$, and 
	the set of accepting states $F_{\text{tm}} \subseteq S_{\text{tm}}$.
	The remaining components—the input alphabet $\Sigma_{\text{tm}} \subseteq \Gamma_{\text{tm}}$, 
	and the blank symbol $b_{\text{tm}} \in \Gamma_{\text{tm}} \setminus \Sigma_{\text{tm}}$—are part of the standard definition but are not central to the mechanics of the simulation.

	\begin{thm}
		There exists an agent function with a context capacity of $C=2$ that, by operating on a Quest Graph, can simulate an arbitrary Turing machine.
	\end{thm}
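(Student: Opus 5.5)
We will simulate the TM by keeping the tape as a path in the Quest Graph, one node per visited cell. Each node has a dummy goal $g_0$ (since goals are fixed at creation and carry no information here), and its response is an element of
\[
\mathcal{R} = (\Gamma_{\text{tm}} \times (S_{\text{tm}} \cup \{\bot\}) \times \{L,R,\bot\}) \cup \{\varepsilon\}.
\]
A triple $(a, s, d)$ records three things: the tape symbol $a$ stored in that cell; the state $s$ the machine was in when it last left the cell ($\bot$ if never left); and the direction $d$ it moved when leaving. The invariant we maintain is the following. The focus node is the current head cell. Its tape symbol is correct. The current TM state is stored in exactly one neighbor of the focus, namely the node the head just came from, and that neighbor's direction field points toward the focus. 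Initially the input is prebuilt as a path. We will use a distinguished start marker so that the first step can read $s^0_{\text{tm}}$, for instance by letting the initial focus carry state $s^0_{\text{tm}}$ in a special "start" form. With $C=2$ the local context is the focus plus its at most two path neighbors, so the agent sees everything it needs.

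The agent function $\chi$ is defined casewise on the local context. \emph{(i)} Identify the previous node: it is the neighbor whose direction field points toward the focus. A left neighbor with $d=R$, or a right neighbor with $d=L$, is the previous node. Read the state $s$ from it and the symbol $a$ from the focus. If $s \in F_{\text{tm}}$, stop. \emph{(ii)} Otherwise compute $\delta_{\text{tm}}(s,a) = (s', a', D)$. If there is a neighbor in direction $D$, perform ``respond then move focus'' with response $(a', s', D)$ and the edge pointer to that neighbor. \emph{(iii)} If there is no neighbor in direction $D$ (the path end; on a one-way tape only $D=R$ at the right end matters, and a left move off cell $0$ is handled by convention), first perform ``discover node'' with configuration $(g_0, (b_{\text{tm}}, \bot, \bot))$, attached to the focus. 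Because the graph changed but the focus did not, the next call of $\chi$ sees the new blank neighbor and falls into case \emph{(ii)}.

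The main obstacle is that the graph is undirected and the agent only sees an unordered local context. We must therefore show two things: that ``left'' and ``right'' neighbors, and the previous node, can be determined from node contents alone, and that stale direction fields on the other neighbor never create ambiguity. First I would try to show that the two neighbors of the focus are distinguished by their direction fields. The cell on the far side from where we came was last left (if ever) moving toward the focus? No: it was last left moving \emph{away} from it, or never visited. So its field either points away or is $\bot$. This is exactly the paper's sketch rule, ``opposite direction symbol'', and it must be proved by induction on steps, using the fact that the head moves one cell at a time on an acyclic path. If raw direction fields turn out insufficient to orient which neighbor is left versus right (e.g.\ when both neighbors are $\bot$ or the focus was never left), the fallback is to add a parity label $i \bmod 3$ to each node's response, fixed at creation. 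Position residues mod 3 then orient the path locally, since the left neighbor has residue $i-1$ and the right has $i+1$.

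With the invariant established by induction, each TM step corresponds to one or two agent actions. The Quest Graph halts exactly when the TM reaches an accepting state, and the tape contents can be read off the path. This proves the theorem with $C=2$.
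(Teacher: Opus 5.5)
There is a genuine gap at exactly the step you flag as the main obstacle. Your claimed invariant for the far-side neighbor is false, so rule (i) is ambiguous. Suppose the far-side neighbor has ever been visited. After the head last left it, the head reached the near side of the focus without ever returning to that neighbor. That is impossible if the head had moved away from the focus. So the neighbor was last left moving \emph{toward} the focus, and your first instinct was the correct one.

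Concretely, take cells $c_0,c_1,c_2$ with the head on $c_1$ and moves $R,L,L,R$, i.e.\ the path $c_1\to c_2\to c_1\to c_0\to c_1$. Now $c_0$ carries direction $R$ and $c_2$ carries direction $L$, so both point at the focus. The state stored in $c_2$ is stale, and rule (i) has no way to choose. The mod-$3$ labels do not help, because they only resolve left versus right. They cannot tell which of two inward-pointing neighbors is the previous one. Also, your rule is not the paper's rule: ``opposite of the focus node's direction'' compares against the focus node's \emph{own} field, which your rule never reads.

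The missing idea is to use that own field. If the focus was last left moving $L$, the head must have come back from the left. The previous node is then the left neighbor, whose field is $R$, while the right neighbor has $L$ or was never visited. The case $R$ is symmetric.

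The paper makes this rule uniform by never using a $\bot$ direction. All cells start pointing inward: $R$ to the left of the head and $L$ to the right. The start state sits in the left neighbor, and the focus is set to $L$. Fresh end cells are created as $\varepsilon_L$ or $\varepsilon_R$. As a result, the local pattern is always $[R,L,L]$ or $[R,R,L]$. The neighbors' fields give the orientation, and the focus's field picks the previous node, all with $C=2$ and no parity labels.

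If you keep your $\bot$ encoding, you need one extra case. When the focus's field is $\bot$, this is its first visit, so the far side has never been visited and the previous node is the unique neighbor with a non-$\bot$ field. Otherwise, apply the opposite-direction rule. The previous node's field then also tells you which side it lies on, so the mod-$3$ fallback becomes unnecessary, and your induction goes through.
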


	\begin{proof}
		We represent the Turing machine's tape as a linear Quest Graph where each node corresponds to a tape cell and has at most two neighbors (its adjacent cells). 
		The goal component of each node is unused; all necessary TM information is encoded in the response field.

		We define the set of possible responses, $\mathcal{R}$, to be the set of all TM tape configurations plus two special symbols for the ends of the tape: 
		$\mathcal{R} = \left( S_{\text{tm}} \times \Gamma_{\text{tm}} \times \{L, R\} \right) \cup \{ \varepsilon_L, \varepsilon_R \}$. 
		A standard response thus contains a state, a tape symbol, and a direction. 
		The special responses, $\varepsilon_d = (\varepsilon, \varepsilon, d)$ for $d \in \{L, R\}$, 
		are used to mark the dynamically created boundaries of the tape.

		The initial TM tape is configured on a linear chain of Quest Graph nodes, with the focus node representing the TM's head. The setup involves three steps:
		\begin{enumerate}
			\item The initial tape symbols are written to the response field of each corresponding node.
			\item Next, the direction symbols are initialized to point \textit{inward} toward the focus: every node on the left has its direction set to $R$, and every node on the right has its direction set to $L$.
			\item Finally, the TM's start state, $s^0_{\text{tm}}$, is placed on a single node adjacent to the focus (e.g., the left neighbor). 
			To ensure the agent can identify this node, the focus node's own direction is set to be the opposite of that neighbor's direction. 
			The state fields on all other nodes are arbitrary. For instance, a valid initial configuration is:
			$[(\cdot, \cdot, R), (s^0_{\text{tm}}, \cdot, R), (\cdot, \cdot, L), (\cdot, \cdot, L), (\cdot, \cdot, L)]$ 
			where the middle node is the focus, and dots are wildcards.
		\end{enumerate}

		During the simulation, the agent's function is executed at each step. 
		It reads its local context (the focus node and its one or two neighbors) and determines the appropriate action based on two cases:
		\begin{itemize}
			\item First, if the agent is at an end of the tape (i.e., the focus node has a degree of 1), its task is to extend the tape. 
			It performs a discover action, creating a new node with an empty goal and a special response $\varepsilon_d$, where $d$ is the direction opposite to the existing neighbor.
			\item Otherwise, if the focus node is in the middle of the tape, the agent simulates a standard TM transition as follows:
			\begin{enumerate}
				\item Read State and Symbol: The agent reads the current tape symbol from the focus node's response. 
				It then identifies the \textit{previous} node by applying a simple rule: the previous node is the neighbor whose direction symbol is the opposite of the focus node's own direction symbol. 
				The TM's current state is then read from that previous node's response.
				\item Compute and Write: With the current state and tape symbol, the agent emulates the TM's transition function, $\delta_{\text{tm}}$, to get the new state, new symbol, and new move direction. 
				It then updates the focus node's response with these three new values.
				\item Move: Finally, the agent performs a move focus action, shifting the focus to the neighbor indicated by the new move direction.
			\end{enumerate}
		\end{itemize}
		This process repeats until the agent reads a state that is in the TM's set of accepting states, at which point the simulation halts. 
		The agent's complete logic is formalized in Table \ref{tab:tm}.

		To assert the correctness of the simulation, we show that these invariant patterns hold throughout:
		\begin{itemize}
			\item The current tape symbol can be identified from the focus node's response. 
			This is ensured because the agent writes the new tape symbol to the focus node before moving.
			\item The current state of the TM is always stored in the response of one of the focus node's neighbors 
			and can be unambiguously identified using the direction symbol on the focus node. 
			This is because the configuration of direction symbols of the local context 
			[left neighbor, focus node, right neighbor] always matches one of two patterns: $[R, L, L]$ or $[R, R, L]$. 
			This predictability stems from the fact that the TM's head moves one cell at a time and the tape is not cyclic. 
			If the focus node's response has the direction $L$, it means the last visit to this node ended with a left move; 
			this, in turn, means the agent must have come from the right side, and vice versa. 
			Hence, the previous node is always the neighbor whose direction symbol is opposite to that of the focus node.
			\item The degree of each node is at most two, ensuring the tape remains linear. 
			This is guaranteed because the agent only creates a new node when it is at an end of the tape.
		\end{itemize}

		Given that $S_{\text{tm}}$ and $\Gamma_{\text{tm}}$ are finite, each Quest Graph node involved in the Turing machine simulation interacts with at most two neighbors, 
		and the agent is stateless (deriving all necessary information from the Quest Graph via the focus node and its two neighbors as per Table \ref{tab:tm}), 
		this construction faithfully simulates an arbitrary Turing machine. Therefore, the Quest Graph framework, with such an agent function, is Turing complete.
		\qedhere
	\end{proof}

	\begin{table}[!ht]
		\centering
		\begin{tabularx}{1.0\textwidth}{@{\extracolsep{\fill}} l | X}
			\toprule
			$v_f$, Neighbor 1, Neighbor 2   & Return action \\
			\midrule
			$(\cdot, t, R)$, $(s^1, \cdot, R)$, $(s^2, \cdot, L)$ & response $\delta_{\text{tm}}: (s^2, t) \mapsto (s', t', d')$ and move $d'$\\
			$(\cdot, t, L)$, $(s^1, \cdot, R)$, $(s^2, \cdot, L)$ & response $\delta_{\text{tm}}: (s^1, t) \mapsto (s', t', d')$ and move $d'$\\
			$(\cdot, t, R)$, $(s^1, \cdot, L)$, $(s^2, \cdot, R)$ & response $\delta_{\text{tm}}: (s^1, t) \mapsto (s', t', d')$ and move $d'$\\
			$(\cdot, t, L)$, $(s^1, \cdot, L)$, $(s^2, \cdot, R)$ & response $\delta_{\text{tm}}: (s^2, t) \mapsto (s', t', d')$ and move $d'$\\
			$(\cdot, t, R)$, $(\cdot, \cdot, R)$, $(f, \cdot, L)$ & stop\\
			$(\cdot, t, L)$, $(f, \cdot, R)$, $(\cdot, \cdot, L)$ & stop\\
			$(\cdot, \cdot, \cdot)$, $(\cdot, \cdot, R)$, \text{-} & discover with response $\varepsilon_L$ and link to focus node\\
			$(\cdot, \cdot, \cdot)$, $(\cdot, \cdot, L)$, \text{-} & discover with response $\varepsilon_R$ and link to focus node\\
			\bottomrule
		\end{tabularx}
		\caption{
			The agent function for simulating an arbitrary Turing machine. 
			The dots ($\cdot$) represent wildcards, with variables defined as follows: $s^1, s^2, s' \in S_{\text{tm}}$; $t, t' \in \Gamma_{\text{tm}}$; $d' \in \{L, R\}$; and $f \in F_{\text{tm}}$.
		}
		\label{tab:tm}
	\end{table}
		
	While this proof uses a specialized Quest Graph where nodes have at most two neighbors, practical applications will likely involve graphs with higher degrees of connectivity. 
	This richer connectivity does not reduce the framework's expressiveness, provided the agent can still simulate the base $C=2$ case (for example, by focusing on a ``tape-like" path within a more complex graph).

\section{Computability of Quest Decision Process}
\label{app:fqdp-computability}

	We formally define the Quest Decision Process (QDP) and its finite version.

	\begin{defn}
		A \textit{\textbf{Quest Decision Process}} (QDP) is a more constrained variant of the Quest Graph (Appendix \ref{app:qg-computability}), 
		formalized as a tuple analogous to a POMDP: $(\mathcal{V}_{\text{qdp}}, \mathcal{G}, \chi, R_{\text{qdp}}, \Omega_{\text{qdp}}, \mathcal{O}_{\text{qdp}}, \gamma)$. 
		The key distinction from a standard POMDP is that the QDP operates deterministically. 
		Instead of a probabilistic transition function, it uses a deterministic agent function, $\chi$, 
		which maps a local context directly to an action. The components are defined as follows:
		\begin{itemize}
			\item $\mathcal{V}_{\text{qdp}}$: A finite set of states, corresponding to an augmented set of Quest Graph nodes.
			\item $\mathcal{G}$: A finite set of actions, corresponding to the set of goals in the Quest Graph.
			\item $\chi$: The deterministic agent function, which replaces the standard transition function.
			\item $R_{\text{qdp}}, \Omega_{\text{qdp}}, \mathcal{O}_{\text{qdp}}, \gamma$: 
			The standard reward function, observation set, observation function, and discount factor, respectively.
		\end{itemize}

		Each quest node $v \in \mathcal{V}_{\text{qdp}}$ is a tuple $(g, r)$, where $g \in \mathcal{G}$ is a goal and the response $r \in \mathcal{R} \cup \{\pi\}$. 
		The response set is augmented with a special \textit{\textbf{parent mark}}, $\pi$. 
		A node with this mark is a parent of the current focus node. 
		For brevity, we will refer to the QDP state set as $\mathcal{V}$.

		The QDP operates via four primary hierarchical actions:
		\begin{itemize}
			\item Discover input: Add a new input node as a child of the current focus node.
			\item Discover sub-quest: Add a new quest node as a child of the current focus node, 
			mark the current focus node as a parent, and shift the focus to the new child.
			\item Complete quest: Use respond then move action to record the result, and move the focus back to its parent. 
			\item Stop: Halt the computation. This action is only permitted when the focus node has no parent.
		\end{itemize}
		
		A \textit{\textbf{finite QDP}} (FQDP) is a QDP that further restricts each node to a finite number of children. 
		When this limit is reached, the agent can only perform actions that do not create additional children for the focus node.
	\end{defn}

	By restricting the agent in this manner, the FQDP is no longer Turing complete. 
	Instead, it becomes computationally equivalent to a less powerful, yet still significant, model of computation:

	\begin{thm}
		A Finite Quest Decision Process (FQDP) is computationally equivalent to a deterministic pushdown automaton (DPDA).
	\end{thm}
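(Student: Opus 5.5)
We prove the equivalence in two directions, each by a step-by-step simulation. The main idea is that the active path of the FQDP rollout (the root, the chain of parent-marked ancestors, and the current focus node) behaves exactly like a stack. The agent's local context then behaves like the finite control reading the top of that stack.

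\emph{FQDP $\Rightarrow$ DPDA.} The first step is to show that the whole history of the rollout is irrelevant except for this active path. Once a sub-quest is completed, focus returns to its parent and can never re-enter the completed child. That child then contributes only its final (goal, response) pair to the parent's context. The key bookkeeping point is that, because every node has at most a bounded number $K$ of children, the information the agent can ever see at a node is finite. This information is the node's own configuration together with the ordered list of at most $K$ children's configurations (hence at most $\min(K,C)$ of them in the context). We therefore encode each node on the active path as a single stack symbol. That symbol records its goal, its response, and the tuple of completed-children configurations accumulated so far, and the set of such symbols is finite.

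The DPDA keeps this symbol for the focus node on top of the stack. It uses $\varepsilon$-moves to mimic the agent, plus input-reading moves whenever $\chi$ issues a discover input action, with the external information supplied by the next input letter. The moves correspond as follows. Discover input rewrites the top symbol by appending an input child. Discover sub-quest rewrites the top as a $\pi$-marked parent and pushes a fresh child symbol. Complete quest pops the top and folds its final configuration into the new top's child list. Stop in the root configuration is acceptance. Since $\chi$ is a function, every transition is determined, so the automaton is deterministic. One subtlety is that the parent node is itself a neighbor of the focus. Its configuration must therefore be visible, so the top stack symbol should also carry a copy of the parent's (goal, $\pi$) pair. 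This is a finite amount of information, and the pop can restore it from the symbol underneath.

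\emph{DPDA $\Rightarrow$ FQDP.} Conversely, given a DPDA, we keep its control state in the response of the focus node and let each quest node on the active path stand for one stack cell, with its goal holding the stack symbol. A push becomes a discover sub-quest whose goal is the new symbol and whose response initially carries the state. A pop becomes a complete quest that writes the new state into the response and returns to the parent. Since the parent's response has been overwritten with $\pi$, the parent reads the returned state from its most recently completed child. Reading an input letter becomes a discover input. Here the main obstacle appears: a single stack cell may be revisited unboundedly often, for example through repeated push/pop or repeated input reads without popping, which would exceed the finite child limit. We handle this with the standard normal form in which each DPDA move either pushes one symbol, pops one, or reads input while replacing the top symbol. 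A read-and-replace is simulated by completing the current node, returning a flag, and having the parent immediately discover a fresh sub-quest with the replaced symbol, so no single node accumulates unbounded children. This route is the first we would try. If it also leaves a node with unboundedly many children, we fall back on an indirection in which each stack cell becomes a chain of bounded-arity nodes, preserving the LIFO structure. Acceptance is handled by completing back to the root and stopping there.
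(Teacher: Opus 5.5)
Your FQDP$\Rightarrow$DPDA direction is sound and is essentially the paper's. The paper keeps the local context in the finite control and pushes the parent's context on each sub-quest, while you keep the focus node's record on top of the stack; these are equivalent encodings. The gap is in the converse, exactly at the obstacle you flag, and your primary route does not close it. Each read-and-replace makes the node for the cell below discover another sub-quest child. So a DPDA that reads a long input at constant stack height gives that node unboundedly many children. Likewise, a cell revisited by repeated push/pop gets a new input child and a new sub-quest child on every revisit. More fundamentally, no scheme in which active-path nodes correspond one-to-one to stack cells can respect the child limit. Every node is created as a child of the current focus, so if the stack height stays at most $h$ throughout the run, every node ever created lies at depth at most $h+1$. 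With at most $K$ children per node, the tree then has at most $\sum_{d\le h+1}K^d$ nodes, yet every input symbol read creates a new node. Hence the active path must lengthen even when the stack does not. The ``chain of bounded-arity nodes per stack cell'' that you leave as an unspecified fallback is therefore the actual content of this direction.

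The paper makes this precise with replica nodes. Goals are triples $(\text{type}, Z, s)$ with type $i$ (input), $q$ (quest) or $q'$ (replica), so the DPDA state travels in goals rather than in responses. Your push step ``whose response initially carries the state'' is not available: a discovered sub-quest starts with an empty response, and a node can write its own response only when it completes and returns to its parent. The paper uses the push-one-or-pop-one normal form, so no replace step is needed. A quest or replica node first discovers one input child. It then either pushes, by discovering and pursuing a $q$ child carrying the new symbol and state, or pops, by responding with the new state and returning. When a $q$ child returns, the node does not act again itself. Instead it discovers a replica child $(q', Z, s')$, with its own stack symbol $Z$ and the returned state $s'$, and pursues it, so the same stack cell continues one level deeper. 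A node whose last child is a replica simply forwards the returned state to its parent. The first node up the chain whose last child has type $q$ is the node for the cell below the popped one, and it spawns its own replica. The type tag lets the stateless agent distinguish ``propagate this pop'' from ``resume this cell.'' Each node has at most an input child, a $q$ child and a $q'$ child plus its parent, which is why $C=4$ suffices while depth grows without bound.
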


	Let a pushdown automaton (PDA) be defined as a 7-tuple
	($S_{\text{pda}}, \Sigma_{\text{pda}}, \Gamma_{\text{pda}}, \delta_{\text{pda}}, s^0_{\text{pda}}, Z^0_{\text{pda}}, F_{\text{pda}}$), where:
	\begin{itemize}
		\item $S_{\text{pda}}$ is a finite set of states.
		\item $\Sigma_{\text{pda}}$ is a finite set of input symbols (the input alphabet).
		\item $\Gamma_{\text{pda}}$ is a finite set of stack symbols (the stack alphabet).
		\item $\delta_{\text{pda}}$ is the transition function.
		\item $s^0_{\text{pda}} \in S_{\text{pda}}$ is the initial state.
		\item $Z^0_{\text{pda}} \in \Gamma_{\text{pda}}$ is the initial stack symbol.
		\item $F_{\text{pda}} \subseteq S_{\text{pda}}$ is the set of accepting states.
	\end{itemize}
	A PDA's configuration is a tuple $(s, w, t)$, where $s \in S_{\text{pda}}$ is the current state, $w \in \Sigma^\ast_{\text{pda}}$ is the remaining input, and $t \in \Gamma^\ast_{\text{pda}}$ is the current stack content. 
	The transition function for a general (non-deterministic) PDA is defined as $\delta_{\text{pda}}: S_{\text{pda}} \times (\Sigma_{\text{pda}} \cup \{\varepsilon\}) \times \Gamma_{\text{pda}} \to \mathcal{P}_{\text{fin}}(S_{\text{pda}} \times \Gamma^\ast_{\text{pda}})$, 
	mapping the current state, input symbol (or $\varepsilon$), and top of the stack to a finite set of possible next states and stack operations.

	There is an equivalent alternative definition where the function only allows a single symbol to be pushed or popped at a time:
	\begin{align*}
		\delta_{\text{pda}}: S_{\text{pda}} \times (\Sigma_{\text{pda}} \cup \{\varepsilon\}) \times \Gamma_{\text{pda}} \to \mathcal{P}_{\text{fin}}(S_{\text{pda}} \times (\Gamma_{\text{pda}} \cup \{\varepsilon\}))
	\end{align*}
	For the proofs that follow, we will use this latter, more constrained definition.

	A PDA accepts an input string if, after the entire string is processed, either of two conditions is met: 
	\begin{inparaenum}
		\item the machine is in an accepting state ($s \in F_{\text{pda}}$), or
		\item the stack is empty.
	\end{inparaenum}

	A deterministic PDA (DPDA) is a PDA with two restrictions:
	\begin{itemize}
		\item For any given configuration, there is at most one possible transition.
		\item If an $\varepsilon$-transition is defined for a given state and stack symbol, no other transitions are defined for that same state and stack symbol.
	\end{itemize}

	We will prove this equivalence by constructing a bisimulation between the FQDP and a DPDA, showing the simulation in each direction.
	The FQDP's recursive, stack-like nature allows it to be modeled by a DPDA.
	
	\begin{lem}
		A deterministic pushdown automaton (DPDA) can simulate a Finite Quest Decision Process (FQDP).
	\end{lem}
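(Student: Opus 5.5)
The plan is to build a DPDA whose stack mirrors the chain of ancestors of the current focus node. The finite state control holds everything about the focus node that the agent function $\chi$ can see. The key observation is that in an FQDP the local context of the focus node consists of three things: the focus node itself, its parent (the unique neighbour carrying the mark $\pi$), and its children. Each node has at most some fixed number $K$ of children, so the local context ranges over a finite set of configurations, bounded by $\mathcal{V}^{\leq C}$. I would take the DPDA state set to be $S_{\text{pda}} = \mathcal{V}^{\leq K}$ together with a few auxiliary control states. The state records the focus node's goal and response and the ordered list of its children's configurations, which are all inputs or completed sub-quests. Children created earlier are never modified again once focus returns, so this finite record stays faithful. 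The stack alphabet is $\Gamma_{\text{pda}} = \mathcal{V}^{\leq K}$ plus a bottom marker $Z^0_{\text{pda}}$. Each stack symbol stores the full local record of an ancestor, with its response replaced by $\pi$.

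The transitions are defined case by case on the action $\chi$ returns. Since the parent's configuration is part of the local context, the DPDA must see it as well, and it does: it is the top stack symbol, which $\delta_{\text{pda}}$ is allowed to read.
\begin{itemize}
\item \textbf{Discover input.} Read one input symbol $a \in \Sigma_{\text{pda}}$, the external information, and append the child $(g, a)$ to the state's child list. The stack is unchanged.
\item \textbf{Discover sub-quest} with goal $g$. Push the current record, with the focus response set to $\pi$, and move to the fresh state $((g,\varepsilon), \emptyset)$. This is an $\varepsilon$-move with a single push.
\item \textbf{Complete quest} with response $r$. Pop the top symbol, restore it as the current record, and append the completed child $(g, r)$ to its child list.
\item \textbf{Stop.} Allowed only when the top of the stack is $Z^0_{\text{pda}}$. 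Go to an accepting or halting state according to the root's response.
\end{itemize}
Each pop followed by a restore must be a single move, but the restricted definition permits only one push or pop per step. I would therefore route such moves through the auxiliary states, or simply use the general $\delta_{\text{pda}}$ with $\Gamma^\ast$ pushes, which is equivalent.

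Correctness comes from an induction on the number of steps, using the invariant that DPDA state plus stack equals the FQDP's ancestor path plus the local data of every node on it. The FQDP constraints are exactly what make this invariant hold:
\begin{itemize}
\item new nodes attach only to the focus node;
\item focus only descends into a brand-new child or returns to the parent;
\item past nodes are never rewritten.
\end{itemize}
Because of these, the parts of the graph off the current path are never observed again and can be discarded.

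Determinism follows from $\chi$ being a function. The main obstacle is the DPDA determinism condition, which says an $\varepsilon$-move and an input-reading move must not both be defined for the same state and top symbol. This is handled by noting that $\chi$ fixes the action from the state and top symbol alone. When $\chi$ returns discover input, only input-reading transitions are defined, one per symbol $a$. Otherwise only the single $\varepsilon$-transition is defined. The child limit $K$ is enforced by defining no discover transitions from states whose child list is full, which is consistent with the FQDP rule.
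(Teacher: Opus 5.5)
Your proposal is correct and follows essentially the same construction as the paper. The paper likewise lets the DPDA state encode the FQDP's local context, pushes the focus record with its response replaced by $\pi$ on discover sub-quest, pops and folds in the child's response on complete quest, and consumes an input symbol on discover input. The only differences are bookkeeping: you keep the parent's configuration on the stack top instead of inside the state, and you make explicit the $\varepsilon$-versus-input determinism condition and the child bound $K$, which the paper leaves implicit in the remark that $\chi$ is a function.
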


	\begin{proof}
		We construct a DPDA to simulate the FQDP by mapping the FQDP's components and actions to their DPDA equivalents.

		We map FQDP components to DPDA components as follows:
		\begin{itemize}
			\item DPDA States: Each state of the DPDA represents a unique local context in the FQDP. 
			We define a bijective mapping, $\sigma_{\text{pda}}: \mathcal{V}^{\leq C} \to S_{\text{pda}}$, from the finite set of all possible FQDP contexts to the DPDA's state set, $S_{\text{pda}}$. 
			Since the FQDP's context is finite, its state set is also finite.
			\item DPDA Stack: The DPDA's stack mirrors the FQDP's call-return hierarchy. 
			When the FQDP agent discovers a sub-quest, the state corresponding to the updated local context is pushed onto the stack.
		\end{itemize}
		
		The FQDP actions are simulated by the DPDA as follows:
		\begin{itemize}
			\item Discover input: The DPDA consumes an input symbol and transitions to a new state that reflects the FQDP's updated local context with the new input node.
			\item Discover sub-quest: The DPDA pushes the updated state, corresponding to the local context with a parent mark, onto the stack. 
			It then performs an $\varepsilon$-transition to a new state representing the context of the newly created sub-quest.
			\item Complete quest: The DPDA pops the parent's state from the stack. 
			It then performs an $\varepsilon$-transition to a new state that reflects the parent's updated context, 
			which now incorporates the response from the completed sub-quest.
			\item Stop: The DPDA halts when it reaches a state that corresponds to the FQDP's termination condition.
		\end{itemize}
		A detailed mapping of these actions to formal transition rules is shown in Table \ref{tab:dpda_fqdp_map}.  
		
		The following invariants are maintained throughout the simulation:
		\begin{itemize}
			\item The states of the DPDA accurately represent the FQDP's local contexts.
			\item The symbol at the top of the DPDA's stack corresponds to the context of the parent node in the FQDP.
		\end{itemize}
		Because the FQDP is deterministic, the corresponding DPDA transitions can resolve unambiguously, ensuring these invariants hold.

		This construction therefore demonstrates that a DPDA can faithfully simulate the behavior of an FQDP.
		\qedhere
	\end{proof}

	\begin{table}[ht!]
		\centering
		\begin{tabularx}{\linewidth}{@{\extracolsep{\fill}} X l l}
			\toprule
			FQDP Operation & Equivalent DPDA Transition \\
			\midrule
			Discover input $i$  & $\delta_{\text{pda}}: (s, i, Z) \mapsto \left(\sigma_{\text{pda}}(\sigma^{-1}_{\text{pda}}(s) + i), Z\right)$ \\
			Discover sub-quest $q$ & $\delta_{\text{pda}}: (s, \varepsilon, Z) \mapsto \left(\sigma_{\text{pda}}(q), \sigma_{\text{pda}}(\sigma^{-1}_{\text{pda}}(s) \leftarrow \pi ) Z\right)$ \\
			Complete quest with $r$ & $\delta_{\text{pda}}: (s, \varepsilon, pZ) \mapsto \left(\sigma_{\text{pda}}(\sigma^{-1}_{\text{pda}}(p) + \sigma^{-1}_{\text{pda}}(s) \leftarrow r), Z\right)$ \\
			Stop at $v^{\leq C}$ & halt at $\sigma_{\text{pda}}(v^{\leq C}) = f$  \\
			\bottomrule
		\end{tabularx}
		\caption{
			Mapping FQDP operations to DPDA transition rules. The notation is defined as follows:
			$\sigma_{\text{pda}}$ is an invertible mapping from an FQDP's local context to a DPDA state.
			$\sigma^{-1}_{\text{pda}}(\ldots)$ represents the inverse mapping from a DPDA state back to the corresponding FQDP context.
			$s, p \in S_{\text{pda}}$ are variables for the current and parent DPDA states, respectively.
			$f \in F_{\text{pda}}$ is a variable for an accepting DPDA state.
			$i, q \in \mathcal{V}$ are variables for FQDP input and sub-quest nodes, respectively.
			$r \in \mathcal{R}$ is a variable for the response at the focus node.
			$+$ denotes appending a new node to the local context.
			$v \leftarrow r$ denotes responding to a node $v$ with $r$.
			$v^{\leq C} \in \mathcal{V}^{\leq C}$ is a variable for an FQDP local context with at most $C$ nodes.
			$Z \in \Gamma_{\text{pda}}$ is a variable for the symbol at the top of the stack; $sZ$ denotes pushing state $s$ onto the stack.
		}
		\label{tab:dpda_fqdp_map}
	\end{table}

	\begin{lem}
		A Finite Quest Decision Process (FQDP) with a local context size of $C=4$ 
		can simulate an arbitrary deterministic pushdown automaton (DPDA).
		\label{lem:fqdp-dpda}
	\end{lem}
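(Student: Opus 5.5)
We construct an FQDP whose agent function encodes the DPDA's transition function $\delta_{\text{pda}}$ in its single-push/single-pop form, with the FQDP's sub-quest hierarchy playing the role of the stack. Each stack symbol $Z \in \Gamma_{\text{pda}}$ is stored as the goal of a quest node, so the branch from the root to the focus node spells the stack contents from bottom to top, and the root carries $Z^0_{\text{pda}}$. The DPDA's current state $s \in S_{\text{pda}}$ and the most recently consumed input symbol live in the responses of nodes in the local context. We take $\mathcal{G} \supseteq \Gamma_{\text{pda}} \cup \Sigma_{\text{pda}}$ and $\mathcal{R} \supseteq S_{\text{pda}} \cup (S_{\text{pda}} \times \Gamma_{\text{pda}})$, all of which are finite.

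The plan is to fix a node layout in which a context of at most four nodes suffices. The four nodes are the focus node (goal: the top stack symbol $Z$), its parent (so that the symbol below the top can be exposed after a pop), the child recording the current DPDA state, and the freshly discovered input node carrying the symbol $a$. Each DPDA move is then simulated by a short macro.

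\textbf{Reading.} For a transition consuming $a$, the agent performs \emph{discover input}. It then reads $(s,a,Z)$ from the context and writes the new state into a fresh state-holder child.

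\textbf{$\varepsilon$-moves.} These skip the input step. Determinism of the DPDA, specifically its clause that an $\varepsilon$-transition excludes other transitions for the same $(s,Z)$, makes the choice between discovering input and acting immediately a deterministic function of the context, so $\chi$ is well defined.

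\textbf{Push $Z'$.} The agent performs \emph{discover sub-quest} with goal $Z'$. The new child must know the state $s'$, so we let the goal be the pair $(Z',s')$. Alternatively the child immediately discovers an input-like state node, but pairing goals is cleaner.

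\textbf{Pop.} The agent performs \emph{complete quest}, responding with the new state $s'$ and moving to the parent. The parent's context then shows the returned response $s'$ in its child, while its own goal is the new top symbol.

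\textbf{Replace top.} A pop followed by a push, or else a response that rewrites the current state, handles a transition that replaces the top symbol.

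\textbf{Acceptance.} The simulation stops once the entire input is consumed in a state of $F_{\text{pda}}$ or the stack becomes empty, i.e. the focus has returned to the root. Stop is legal there because the root has no parent.

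The main obstacle is the finite-children restriction. A node sitting at the top of the stack during an unbounded run of reads and $\varepsilon$-moves would accumulate unboundedly many input and state children. To avoid this, each read or state change without a push is simulated by \emph{spawning a fresh sibling-level frame}: the agent completes the current quest with the updated state, and the parent immediately re-discovers a sub-quest with the same goal $Z$ and the new state. The parent is then left with a growing number of children, however, which merely pushes the problem one level up. The fix I would try first is to chain instead. After a read, the agent discovers a sub-quest $(Z,s')$ marked as a ``continuation'' copy of the same stack cell, and a pop unwinds all continuation copies of the cell in one cascade. Each unwinding step sees a child response tagged ``pop-through'', and the agent keeps completing until it reaches a non-continuation node. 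Every node then has at most a constant number of children: one input node, one state or sub-quest child, plus a replica.

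Correctness then follows by an invariant argued by induction on the number of DPDA steps. After each macro, the sequence of non-continuation goals on the active branch equals the DPDA stack, and the focus context encodes the current state and top symbol. Finally we check that each rule in $\chi$ reads at most four nodes, which is where the value $C=4$ comes from.
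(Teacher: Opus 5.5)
Your final construction is essentially the paper's. Your ``continuation copies'' are the paper's replica nodes (type $q'$), and a goal $(Z,s)$ with a continuation flag is the paper's goal $(\text{type},Z,s)$. The bound of one input child, one pushed sub-quest and one replica per node, plus the parent, is exactly how the paper keeps the focus degree at four and so gets $C=4$. Your treatment of $\varepsilon$-moves (act at once when $\delta_{\text{pda}}(s,\varepsilon,Z)$ is defined, otherwise discover input) is cleaner than the paper's input node with an empty response.

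Some of the intermediate machinery should be discarded. A ``state-holder child'' is not an FQDP action, since the agent can never write a child's response, so the new state must travel in the goal of the next sub-quest or replica. A top replacement should be a continuation copy carrying the new symbol, because ``pop then push'' gives the parent a second sub-quest child and breaks the degree bound. You should also state explicitly that a replica is spawned every time a pushed child returns to a cell. With push/pop-only transitions, this revisiting, not a run of reads at the top, is what causes unbounded branching.

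Two steps fail as literally stated.

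\emph{The cascade rule.} It must be keyed to the type of the child just returned from, as in the paper's Table~\ref{tab:agent_function_dpda}. Returning from a replica means the focus belongs to the popped cell, so it responds and moves up. Returning from a genuine sub-quest means the focus is the cell below, so it spawns a replica. ``Keep completing until you reach a non-continuation node'' is wrong in both directions. The original node of the popped cell must itself complete, and the node that should resume may itself be a continuation copy of the cell below.

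\emph{Termination.} Stop is legal only at the root, so acceptance by final state with a nonempty stack needs an unwinding phase. The paper implements this with a terminal response $\varphi$ propagated up the parent chain. The agent also needs an end-of-input marker in the discovered input node to know the string is exhausted. Your remark that stop is legal because the root has no parent covers only the case where the focus is already at the root.
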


	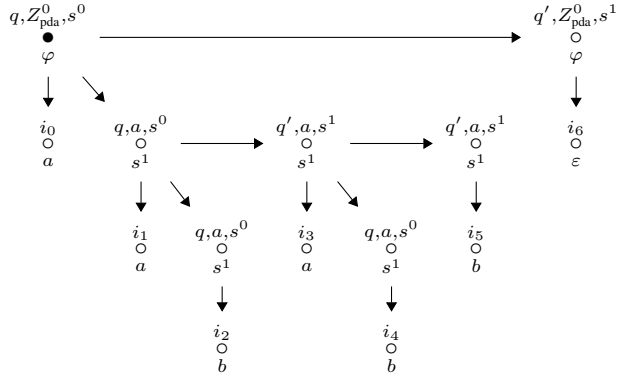
\begin{figure}[ht!]
		\centering % To center the diagram if it's narrower than \textwidth
		\[
			\begin{tikzcd}[row sep=1em, column sep=0.1em, every arrow/.append style={solid, -Triangle}]
			\questnode[\bullet]{q, Z^0_{\text{pda}}, s^0}{\varphi} \arrow[d] \arrow[dr] \arrow[rrrrrr] &&&&&& \questnode{q', Z^0_{\text{pda}}, s^1}{\varphi} \arrow[d]  \\
			\questnode{i_0}{a} & \questnode{q, a, s^0}{s^1} \arrow[d] \arrow[dr] \arrow[rr] && \questnode{q', a, s^1}{s^1} \arrow[d] \arrow[dr] \arrow[rr] && \questnode{q', a, s^1}{s^1} \arrow[d] & \questnode{i_6}{\varepsilon} \\
			& \questnode{i_1}{a} & \questnode{q, a, s^0}{s^1} \arrow[d] & \questnode{i_3}{a} & \questnode{q, a, s^0}{s^1} \arrow[d] & \questnode{i_5}{b} & \\
			&& \questnode{i_2}{b} && \questnode{i_4}{b} &&
			\end{tikzcd}
		\]
		\caption{
			An FQDP with $C=4$ simulating a DPDA that accepts the input string $aababb$. 
			The sequence of input nodes (labeled $i_0, i_1, \ldots$) illustrates the order in which the DPDA consumes the input symbols.
			$\varphi$ is a special terminal response, indicating the DPDA has successfully processed the input.
			The arrows are for visualization only, showing the parent-child relationships created during the simulation, 
			and do not represent the graph's undirected edges.
		}
		\label{fig:fqdp-pda}
	\end{figure}

	\begin{table*}[ht!]
		\centering
		\begin{tabularx}{\linewidth}{@{\extracolsep{\fill}} l X}
			\toprule
			last child type & perform \\
			\midrule
			empty & Read Input: Discover a new ``input node." If the DPDA can make an $\varepsilon$-transition, the new node's response may be empty. \\
			input $a$ & Apply Transition: Compute $\delta_{\text{pda}}: (s, a, Z) \mapsto (s', \gamma')$. 
			If $\gamma'$ is a push operation, discover a ``sub-quest" node representing $(s', \gamma')$, 
			then respond with $\pi$ to mark the current node as a parent and move focus to the newly created sub-quest. 
			Otherwise, if it is a pop operation, complete the current node by responding with the new state $s'$ and moving focus to the parent node. \\
			sub-quest & Create Replica: Discover a new ``replica node" using the state $s$ from the sub-quest's response and the stack symbol copied from the current focus node's goal. 
			Then, respond with $\pi$ to mark the current node as a parent and move focus to the newly created replica node. \\
			replica & Exit Replica: Respond with the state $s$ from the replica's response and move focus to the parent quest node. \\
			\bottomrule
		\end{tabularx}
		\caption{Agent actions for simulating DPDA.}
		\label{tab:agent_function_dpda}
	\end{table*}

	\begin{proof}
		We construct an FQDP to simulate an arbitrary DPDA. 
		To achieve this, we encode the DPDA's configuration information directly into the structure of the FQDP's quest nodes.

		An FQDP quest node is a tuple $(g, r)$. The goal, $g$, is itself a tuple, $(\text{type}, Z, s)$, 
		representing the node's role in the simulation (type), a stack symbol from the DPDA ($Z \in \Gamma_{\text{pda}} \cup \{\varepsilon\}$), and a DPDA state ($s \in S_{\text{pda}} \cup \{\varepsilon\}$). 

		The response, $r \in S_{\text{pda}} \cup \Sigma_{\text{pda}} \cup \{\pi, \varepsilon, \varphi\}$, is used to pass operational results. 
		These can be a DPDA state, an input symbol, the parent mark ($\pi$), an empty value ($\varepsilon$), 
		or a special terminal response, $\varphi$, which indicates the DPDA has completed processing the input and the simulation is ready to terminate.

		We define three node types to drive the simulation:
		\begin{itemize}
			\item $i$ (input node), for consuming an input symbol;
			\item $q$ (quest node), for initiating a sub-quest to simulate a stack push; and
			\item $q'$ (replica node), a copy of a quest node created to ensure the agent's local context size does not exceed the limit of $C$.
		\end{itemize}

		The simulation begins with a quest node representing the DPDA's initial state and an empty stack. 
		The agent's policy is then determined by the configuration of the focus node's children, as detailed in Table \ref{tab:agent_function_dpda} and summarized here:
		\begin{itemize}
			\item No Children: The agent discovers a new input node to consume the next symbol from the DPDA's input tape.
			\item Last Child is an input node: The agent has the necessary information to compute the DPDA transition. 
			If the transition is a ``push", the agent discovers a new quest node and pursues it as a sub-quest. 
			If it is a ``pop", the agent completes the current node and moves the focus back to its parent.
			\item Last Child is a sub-quest node: This indicates a return from a ``pop" operation. 
			The agent discovers a replica node, using the state from the sub-quest's response and 
			the stack symbol from the current focus node's goal, and then pursues the replica as a sub-quest.
			\item Last Child is a replica node: This indicates the agent is in the process of popping the stack. 
			It continues to move up the parent chain until a non-replica quest node is reached.
		\end{itemize}
		This process continues until the DPDA's acceptance conditions are met, at which point the FQDP begins its termination sequence. 
		The agent responds to the current quest with the special terminal symbol, $\varphi$, and moves focus to its parent. 
		It repeats this process of moving up the parent chain until the root node is reached. 
		Once the root node is marked with the terminal symbol, the agent performs a ``stop" action, halting the simulation.

		The invariant of this simulation can be understood from the perspective of the DPDA's transitions. 
		The invariant state is defined as the point where the agent's focus is on a quest or replica node 
		that has an input node as its most recently created child.

		In this configuration, the following properties hold:
		\begin{itemize}
			\item The agent has all the necessary information to compute the next DPDA transition.
			\item The DPDA's current state and the symbol at the top of its stack are stored in the goal of the focus node.
			\item The next input symbol is stored in the response of the child input node.
			\item The maximum degree of the focus node never exceeds four, ensuring the local context size remains within the limit of $C=4$.
		\end{itemize}
		From this state, the agent's policy is deterministic. 
		It computes the transition and executes a sequence of actions that will eventually lead it to a new state 
		that also satisfies this invariant (or to a halting condition).

		Because the agent's function depends only on its local context (bounded by $C=4$), this construction successfully simulates an arbitrary DPDA.
		\qedhere
	\end{proof}

	Based on the preceding two lemmas, which establish a simulation equivalence in both directions, we conclude that the FQDP is computationally equivalent to a DPDA.

\section{Computability of Non-deterministic Finite Quest Decision Processes}
\label{app:NFQDP-computability}

	\begin{defn}
		A \textit{\textbf{Non-deterministic Finite Quest Decision Process}} (NFQDP) is a variant of the FQDP, defined by five primary hierarchical actions:
		\begin{itemize}
			\item Discover input: Adds a new input node as a child of the current focus node. 
			Input nodes are considered complete upon discovery.
			\item Discover sub-quest: Adds a new, incomplete quest node as a child of the current focus node.
			\item Pursue sub-quest: Marks the current focus node as a parent and shifts the focus to one of its incomplete child nodes.
			\item Complete quest: Marks the current focus node as complete by recording a result in its response, then moves the focus back to its parent. 
			This action is only permitted after all of the focus node's children are complete.
			\item Stop: Halts the computation. This action is only permitted when the root node of the quest tree is complete.
		\end{itemize}
		The key distinction in the NFQDP is the separation of the discover and pursue sub-quest actions, 
		which allows an agent to traverse a pre-existing graph structure depth-wise.
	\end{defn}

	The following corollary states the computational power of this new model.
	\begin{cor}
		A Non-deterministic Finite Quest Decision Process (NFQDP) is computationally equivalent to a pushdown automaton (PDA).
	\end{cor}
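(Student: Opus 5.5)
The plan mirrors the bisimulation used for the FQDP--DPDA equivalence, proving one simulation lemma in each direction and reusing the constructions of Appendix \ref{app:fqdp-computability}. The one new ingredient is how non-determinism is read on each side. For the NFQDP, non-determinism lives in the \emph{pursue sub-quest} action: the agent may shift focus to any incomplete child, and a configuration accepts if \emph{some} admissible choice of pursuits leads to the stop action. For the PDA, non-determinism lives in the finite set of transitions returned by $\delta_{\text{pda}}$. Acceptance will be taken existentially on both sides, so the task is to match the branching of one model against the branching of the other.

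\textbf{PDA simulates NFQDP.} I would reuse the states $S_{\text{pda}} = \sigma_{\text{pda}}(\mathcal{V}^{\leq C})$ and the stack discipline of Table \ref{tab:dpda_fqdp_map}. The operations map as follows.
\begin{itemize}
\item \emph{Discover input} becomes input consumption.
\item \emph{Discover sub-quest} becomes an $\varepsilon$-transition that only updates the local context, since focus does not move.
\item \emph{Pursue sub-quest} becomes a non-deterministic $\varepsilon$-transition with one branch per incomplete child visible in the context. Each branch pushes the parent context with the mark $\pi$ and enters the chosen child's context.
\item \emph{Complete quest} becomes a pop, enabled only when every child in the context is complete.
\end{itemize}
Because the child limit is finite and the context is finite, the completeness guard and the set of choices are functions of the state, so $\delta_{\text{pda}}$ stays finite. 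Stop corresponds to entering an accepting state when the root is complete.

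One point needs care here. The NFQDP's ``all children complete'' condition must be checkable from the local context. I would argue this holds because each node has at most the finite child bound of children, and this bound can be assumed to be at most $C$. Otherwise I would fold a finite completion counter into the PDA state.

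\textbf{NFQDP simulates PDA.} I would adapt Lemma \ref{lem:fqdp-dpda}. When the last child of the focus node is an input node, the deterministic ``apply transition'' row of Table \ref{tab:agent_function_dpda} is replaced as follows. The agent discovers one candidate sub-quest node for each element of $\delta_{\text{pda}}(s,a,Z)$; there are finitely many, bounded by $\max|\delta_{\text{pda}}|$. It then pursues one of them non-deterministically. Pop transitions get the same treatment: a candidate ``pop'' child is created and pursued, and it immediately completes with the new state. This gives the current node a uniform way to return.

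The $\varepsilon$-transitions of the PDA are handled by allowing an input node with empty response, as in the FQDP construction. The replica-node trick is kept so that the degree stays bounded. The context bound grows to $C = 4 + \max|\delta_{\text{pda}}|$, which is still a constant for a fixed PDA.

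\textbf{Main obstacle.} I expect the hardest step to be the completion rule. A node may complete only after \emph{all} its children are complete, yet the unpursued candidate siblings remain incomplete. I would first try letting the pursued branch, on return, cause the agent to discover nothing further and complete each unpursued candidate trivially with a dummy ``abandoned'' response. This has to be done without pursuing them, so I would add a cheap sub-quest that completes immediately.

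If that is disallowed, the fallback is to have each candidate node store in its response whether it is the chosen branch. Unchosen branches are pursued and immediately completed with $\varepsilon$-like dummy results before the chosen branch is pursued, which I would enforce by ordering the pursuits. In either version I must check that the extra steps neither change the accepted language nor exceed the context bound. With that checked, the two lemmas combine as in the FQDP case to give the claimed equivalence.
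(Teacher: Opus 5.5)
\textbf{The step that fails.} The step that fails is the one you flag as the main obstacle, and neither of your repairs is available in an NFQDP.

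After creation, a node's response changes only while it is the focus: pursue writes $\pi$, and complete quest writes the result. So the parent cannot stamp unpursued candidates as ``abandoned''. For the same reason, a candidate cannot carry a ``chosen'' bit in its response before it is itself completed.

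Ordering the pursuits does not help either. When a candidate is pursued, its local context is always the same: its own goal with response $\varepsilon$, no children, and its parent with response $\pi$. This holds whether it is the chosen branch or a leftover, because siblings are not visible. The completion rule forces every candidate to be pursued before the parent can complete. A deterministic $\chi$ would therefore run every push candidate as a full sub-simulation, consuming input, and the run would no longer follow a single PDA branch.

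The missing idea is to separate choosing from executing. Make the candidates inert: when pursued, each completes at once, returning its transition index and reading nothing. When the parent sees exactly one completed candidate and no executor child, it discovers a fresh executor sub-quest whose goal records that transition. Leftover candidates then complete harmlessly in any order.

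There is a second, smaller gap. You handle $\varepsilon$-moves with an input node that has an empty response, borrowed from the DPDA proof. That trick relied on the DPDA restriction that an $\varepsilon$-move excludes input moves for the same state and stack symbol. In a general PDA, whether to read a symbol at all is itself a nondeterministic choice, so it must also be routed through a pursued choice.

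\textbf{Comparison with the paper.} More fundamentally, your argument rests on reading NFQDP acceptance existentially over pursuit choices, and the paper's proof never uses that convention. In the paper, $\chi$ remains a function, and the NFQDP's extra power comes from traversing a \emph{pre-built} tree. Accordingly, the paper does not simulate a PDA directly; it notes that $\varepsilon$-transitions make this awkward and goes through Chomsky normal form instead.

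For the given input, the paper pre-builds a chain of all binary parse trees with the input symbols at the leaves, using $C=3$. The agent deterministically visits every tree and fills each production node with the set of nonterminals that generate its span. It then propagates a boolean backward along the chain of root quest nodes, recording whether some tree has the start symbol at its root. Since every node is meant to be visited, neither the leftover-sibling problem nor $\varepsilon$-moves arise.

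Your PDA-simulates-NFQDP direction is essentially the paper's, with one difference. The paper spends the PDA's nondeterminism on guessing the pursued child's local context in the pre-built structure, since those pre-built children are invisible from the parent. Your step ``enter the chosen child's context'' tacitly assumes the child has no pre-built subtree. If you keep your direct route, you must state the existential acceptance convention explicitly as part of the NFQDP definition.
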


	First, we show that a PDA can simulate an NFQDP, as stated in the following lemma.
	\begin{lem}
		A pushdown automaton (PDA) can simulate a Non-deterministic Finite Quest Decision Process (NFQDP).
	\end{lem}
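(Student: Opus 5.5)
I will mirror the proof that a DPDA can simulate an FQDP, replacing determinism by nondeterministic choice. The main difference is the pursue action. In an NFQDP the focus node may hold several incomplete children and may pursue any one of them. So the PDA must track a finite summary of the children, not just the last child. Because each node has a bounded number of children, a focus node together with its children has only finitely many configurations. The plan is to encode that whole configuration in the PDA state.

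\textbf{State and stack encoding.} I take the PDA state set to be the set of possible local contexts $\mathcal{V}^{\leq C}$, extended to record every child of the focus node (goal, response, and complete/incomplete flag), together with a flag marking whether the focus is the root. This set is finite because both the configuration space $\mathcal{V}$ and the child limit are finite. The stack stores, for each ancestor, the encoded configuration of that ancestor's node and children at the moment its focus was left. In that stored configuration the parent mark $\pi$ is placed on the ancestor, and the pursued child is identified by its index.

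\textbf{Transitions.} For each local context, the PDA has one transition per action that $\chi$ may select there. This is where the nondeterminism of $\delta_{\text{pda}}$ is used, and it also covers a nondeterministic choice of which incomplete child to pursue.
\begin{itemize}
\item \emph{Discover input} consumes an input symbol and adds a complete input child to the state.
\item \emph{Discover sub-quest} is an $\varepsilon$-move that adds an incomplete child.
\item \emph{Pursue sub-quest} on child $j$ pushes the current encoded configuration (with $\pi$ and index $j$). It then moves by $\varepsilon$ to the state encoding child $j$ as the new focus. The new focus has either no children, if freshly discovered, or its recorded children.
\item \emph{Complete quest} with response $r$ is allowed only when every child is complete. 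It pops the parent's configuration $p$ and moves to the state obtained from $p$ by writing $r$ into child $j$, marking that child complete, and removing $\pi$.
\item \emph{Stop} leads to an accepting state, and is allowed only if the root is complete.
\end{itemize}
Where a rule needs multi-symbol stack operations, I will compile them into single push/pop steps using the constrained form of $\delta_{\text{pda}}$ fixed earlier.

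\textbf{Main obstacle.} The delicate point is the pre-built tree. A child that was discovered, pursued, partially expanded, and then left could in principle keep grandchildren that a finite state cannot hold. I will argue this cannot happen. The complete-quest rule requires all children to be complete before the focus moves up, and the focus only ever moves up through completion. Hence the subtree of any node that has been pursued and exited is entirely finished, and only its response matters. Likewise, an incomplete child that has never been pursued has no children yet. So the bounded encoding loses nothing. I will state this as an invariant and prove it by induction on the number of steps:
\begin{itemize}
\item the PDA state equals the encoding of the focus node and its children;
\item the stack, read from top to bottom, equals the encodings of the ancestors;
\item every off-path subtree is either a single node or fully complete.
\end{itemize}
Together with the transition list, this invariant gives a step-by-step bisimulation, so the PDA accepts exactly when the NFQDP reaches a stop.
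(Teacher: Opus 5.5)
There is a genuine gap, and it sits exactly at the step you flagged as the main obstacle. Your claim that ``an incomplete child that has never been pursued has no children yet'' only holds if the NFQDP starts from a bare root. The same goes for your invariant that every off-path subtree is a single node or fully complete.

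The NFQDP, however, is defined so that the agent can traverse a \emph{pre-existing} graph structure. That is the whole reason discover and pursue are separated. The converse lemma (an NFQDP with $C=3$ accepts any CFL) also runs on a pre-built parse graph, where unpursued quest and production nodes are incomplete yet already have children.

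When your PDA pursues such a child, it moves to a state encoding that child with no children. So $\chi$ is evaluated on the wrong local context, and the simulation diverges from the NFQDP. Your alternative ``or its recorded children'' does not rescue this. A child that was pursued and exited is complete, so it can never be pursued again, and the parent's state never stores grandchildren. Restricting to bare-root runs also means your lemma would not combine with the CFL lemma to give the claimed equivalence, since that lemma relies on pre-built trees.

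Relatedly, you place the PDA's nondeterminism in the wrong spot. $\chi$ is a function, so ``one transition per action that $\chi$ may select'' yields a single transition per state.

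The paper uses nondeterminism exactly where your argument fails. On a pursue, the PDA pushes the parent's context with $\pi$. It then takes an $\varepsilon$-transition that nondeterministically selects the pursued child's local context, including its pre-built children, from the finite set of possible local contexts. The paper treats this matching step informally, but some such step is indispensable.

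Several of your observations survive and are worth keeping. A node left via complete quest is complete, so it is never re-entered and only its response matters. The child limit keeps your state finite. What you need to change is the ``no children'' claim: replace it with this guessing step. Then restate your invariant so that the state encodes the focus node together with its pre-built children.
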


	\begin{proof}
		We extend the proof in Appendix \ref{app:fqdp-computability} by addressing the separation of the discover and pursue sub-quest actions in the NFQDP. 
		The simulation of most actions remains the same, with the key changes outlined below:
		\begin{itemize}
			\item Discover sub-quest: The PDA performs an $\varepsilon$-transition to a new state that reflects the NFQDP's updated local context with the new quest node.
			\item Pursue sub-quest: The PDA pushes the current state (representing the parent's context with a parent mark) onto the stack. 
			It then performs an $\varepsilon$-transition to a new state representing the context of the child sub-quest. 
			This step leverages the non-determinism of the PDA to select the correct transition to match the existing local context of the pre-built tree structure.
			\item The other actions (Discover input, Complete quest, Stop) are simulated similarly to their counterparts in the FQDP-DPDA simulation.
		\end{itemize}

		A PDA's transition function can simulate an NFQDP by leveraging its inherent non-determinism. 
		This is necessary to resolve the agent's state when it pursues a sub-quest, as the local context of the target node can vary depending on the pre-built graph structure. 
		Because the set of possible local contexts in an NFQDP is finite by definition, 
		a finite set of PDA transition rules can be constructed to cover every possible configuration.
		\qedhere
	\end{proof}

	For the other direction of the proof, a direct simulation of a PDA by an NFQDP is complicated by the PDA's non-deterministic $\varepsilon$-transitions. 
	Therefore, we will instead establish the equivalence by showing that an NFQDP can be constructed to accept any string from any given context-free language (CFL), 
	which is the class of languages recognized by a PDA (\citet{hopcroft2001introduction}).

	A \textit{\textbf{context-free grammar}} (CFG) is a formal grammar defined by the 4-tuple $(N_{\text{cfg}}, \Sigma_{\text{cfg}}, P_{\text{cfg}}, S_{\text{cfg}})$, where:
	\begin{itemize}
		\item $N_{\text{cfg}}$ is a finite set of non-terminal symbols.
		\item $\Sigma_{\text{cfg}}$ is a finite set of terminal symbols.
		\item $S_{\text{cfg}} \in N_{\text{cfg}}$ is the start symbol.
		\item $P_{\text{cfg}}$ is a finite set of production rules of the form $A \to \gamma$, 
		where $A \in N_{\text{cfg}}$ is a non-terminal and $\gamma \in (N_{\text{cfg}} \cup \Sigma_{\text{cfg}})^*$ is a string of terminals and non-terminals.
	\end{itemize}
	A context-free language (CFL) is a language generated by a context-free grammar.

	Every context-free grammar can be converted into an equivalent grammar in \textit{\textbf{Chomsky normal form}} (CNF) (\citet{chomsky1959certain}),
	where all production rules must be in one of the following forms:
	\begin{itemize}
		\item $A \to BC$, where $A, B, C$ are non-terminal symbols.
		\item $A \to a$, where $A$ is a non-terminal and $a$ is a terminal symbol.
		\item $S_{\text{cfg}} \to \varepsilon$, where $S_{\text{cfg}}$ is the start symbol. 
		This rule is only permitted if the language contains the empty string, 
		and if this rule exists, the start symbol cannot appear on the right-hand side of any other rule.
	\end{itemize}

	\begin{figure}[ht!]
		\centering % To center the diagram if it's narrower than \textwidth
		\[
			\begin{tikzcd}[row sep=1.0em, column sep=1em, every arrow/.append style={solid, -Triangle}]
			& \questnode{q}{\pi} \arrow[dd] \arrow[rrrr] && & & \questnode[\bullet]{q}{\text{true}} \arrow[d] \arrow[rr] && \ldots \\ 
			&&& & & \questnode{p}{\{S, P\}} \arrow[d] \arrow[dddrr] && \\ 
			& \questnode{p}{\phi} \arrow[d] \arrow[dr] && & & \questnode{p}{\{C\}} \arrow[ddl] \arrow[d] && \\ 
			& \questnode{p}{\phi} \arrow[dl] \arrow[d] & \questnode{p}{\phi} \arrow[d] \arrow[dr] & & & \questnode{p}{\{P\}} \arrow[d] \arrow[dr] &&  \\ 
			\questnode{p}{\{A\}} \arrow[d] & \questnode{p}{\{A\}} \arrow[d] & \questnode{p}{\{B\}} \arrow[d] & \questnode{p}{\{B\}} \arrow[d] & \questnode{p}{\{A\}} \arrow[d] & \questnode{p}{\{A\}} \arrow[d] & \questnode{p}{\{B\}} \arrow[d] & \questnode{p}{\{B\}} \arrow[d] \\
			\questnode{i}{a} & \questnode{i}{a} & \questnode{i}{b} & \questnode{i}{b} & \questnode{i}{a} & \questnode{i}{a} & \questnode{i}{b} & \questnode{i}{b}
			\end{tikzcd}
		\]
		\caption{
			The parse graph for an NFQDP simulation on the input string $aabb$. 
			The graph represents the space of all possible parse trees according to the following CFG in Chomsky Normal Form: 
			$S \to CB \mid AB$, $P \to CB \mid AB$, $C \to AP$, $A \to a$, and $B \to b$.
		}
		\label{fig:nfqdp-cfg}
	\end{figure}
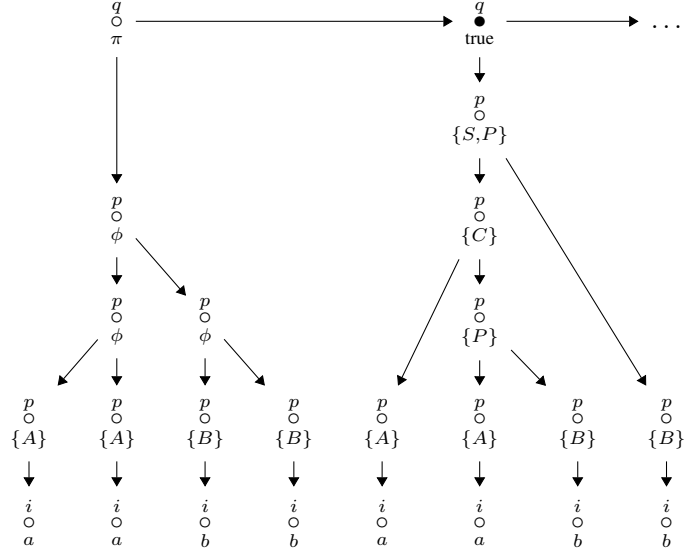

	\begin{lem}
		A Non-deterministic Finite Quest Decision Process (NFQDP) with a local context size of $C=3$
		can be constructed to accept any given context-free language (CFL).
	\end{lem}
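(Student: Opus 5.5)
Fix a CFL $L$ and a grammar for it in Chomsky normal form $(N_{\text{cfg}}, \Sigma_{\text{cfg}}, P_{\text{cfg}}, S_{\text{cfg}})$, as introduced just above. The plan is to realize a CYK-style \emph{parse graph}, as in Figure \ref{fig:nfqdp-cfg}, inside the NFQDP. Each quest node computes a set of nonterminals deriving some substring, and the root decides membership by checking whether $S_{\text{cfg}}$ lies in the root's set. The response alphabet is $\mathcal{R} = \mathcal{P}(N_{\text{cfg}}) \cup \Sigma_{\text{cfg}} \cup \{\pi, \varepsilon, \text{true}, \text{false}\}$, which is finite because $N_{\text{cfg}}$ is. The goals are just the node types $\{q, p, i\}$: the root or driver quest, the parse nodes, and the input nodes. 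The empty-string case $S_{\text{cfg}} \to \varepsilon$ is handled by a root with no parse child: the agent responds true immediately if that rule exists.

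\textbf{Step 1: build the tree using the discover and pursue actions.} The agent exploits the separation of discover and pursue. At a parse node $p$ with no children, it nondeterministically either discovers a single input node (a leaf, for rule $A \to a$), or discovers two parse sub-quests (for a binary split $A \to BC$) and then pursues them left to right. The nondeterminism over split points is resolved by choosing a tree shape: each $p$ node has at most two children plus its parent, so $C=3$ suffices. Since input nodes are consumed in order by the discover input action, every completed tree corresponds to a binary bracketing of the input word $w$. Conversely, every bracketing, and in particular every parse tree of $w$, is reachable by some run.

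\textbf{Step 2: fill in responses bottom-up.} A leaf parse node whose child input carries $a$ completes with the response $\{A : A \to a \in P_{\text{cfg}}\}$. A binary parse node whose children have responses $X, Y$ completes with $\{A : A \to BC,\ B \in X,\ C \in Y\}$. Both responses are computed locally from a context of size at most $3$. The root completes with true iff $S_{\text{cfg}}$ is in its child's set, the input is exhausted, and then it stops.

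\textbf{Step 3: correctness.} By induction on the tree, a node's set is exactly the set of nonterminals deriving its yield \emph{via the chosen bracketing}. Acceptance therefore means that some run yields true, which holds iff some bracketing admits a derivation from $S_{\text{cfg}}$, i.e.\ iff $w \in L$.

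The main obstacle is making acceptance depend on the \emph{whole} input and on nondeterministic choice. The agent is stateless and cannot see the end of the input, so it cannot know when to stop splitting. I would handle this in two ways. First, I would use the PDA-style convention that a run accepts only if it consumes exactly $w$. Second, I would add an end-of-input marker input that the root discovers after its child completes, rejecting if further symbols remain. I would also need to verify that the NFQDP's rule that the agent may complete a node only after all its children are complete does not block the two-child pursuit order. I would argue that pursuing the incomplete children in creation order respects this rule while keeping the degree at most $3$.
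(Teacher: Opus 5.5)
There is a genuine gap in Step 1. Your construction needs the agent to \emph{choose} nondeterministically, at a childless parse node, between discovering an input leaf and splitting into two sub-quests, with acceptance meaning that some run answers true. In this paper, however, the agent $\chi$ is a deterministic function of the local context. The QDP definition says so explicitly, and the NFQDP only changes the action set.

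In your construction the goals are just node types. So every freshly pursued parse node sees essentially the same local context: a $p$ node with response $\varepsilon$ next to a parent marked $\pi$. Hence $\chi$ takes the same action at all of them. The tree shape cannot depend on $w$, and there is no family of runs to quantify over.

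A deterministic refinement of your plan cannot work either. Your tree-building never uses what actually distinguishes the NFQDP: an FQDP agent could discover-and-pursue the left child, return, and then discover-and-pursue the right one. A deterministic agent that starts from a single root and reads $w$ online through discover-input actions can be simulated by a DPDA, by the same argument the paper uses for the FQDP. So it cannot accept, e.g., even-length palindromes. The end-of-input issue you single out as the main obstacle is secondary to this.

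The missing idea is where the nondeterminism of an NFQDP actually lives: in a \emph{pre-built} structure, whose hidden shape the deterministic agent only learns by pursuing into it. For the given $w$, the paper pre-builds a chain of quest nodes $q$. Each $q$ is the root of one of the Catalan-many binary bracketings of $w$, built from production nodes $p$, with the symbols of $w$ already present as input leaves $i$; every node then has degree at most $3$.

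The agent traverses each tree depth-first and deterministically, filling each $p$ node with exactly the CYK-style set of your Step 2, so that part of your proposal matches. It then moves along the chain to the next tree. A backward pass sets each $q$ to true iff its own root set contains $S_{\text{cfg}}$ or its successor is already true, and the agent halts at the first root.

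The existential quantifier over parse trees thus becomes an exhaustive deterministic search. This is also what the NFQDP rule that a node may be completed only after all its children are complete requires: you cannot spawn alternatives and abandon the unexplored ones. And because the input is embedded as leaves rather than streamed, evaluating many alternatives does not consume it.
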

	
	\begin{proof}
		Given an input string and a CFG in Chomsky Normal Form, we first pre-build a graph that represents the space of all possible parse trees for that string. 
		This is feasible with a local context size of $C=3$ because a CNF parse tree is a binary tree, meaning any node has at most two children and one parent. 
		The agent's task is then to perform a depth-first traversal of this pre-built graph to find a parse tree that is valid according to the grammar's production rules.

		The graph is structured as a chain of binary trees linked at their roots, 
		where each binary tree represents a possible parse tree for the input string. 
		We use three node types to construct this parse graph:
		\begin{itemize}
			\item $i$ (input node): An input node is a leaf in a parse tree. 
			Its response contains the corresponding terminal symbol from the input string.
			\item $q$ (quest node): A quest node is the root of a parse tree. 
			Its response is a boolean value indicating whether that parse or any of the subsequent parses is valid.
			\item $p$ (production node): A production node represents the application of a grammar rule. 
			Its response contains the set of all non-terminal symbols that can generate the substring covered by its subtree.
		\end{itemize}
		The goal field for all nodes simply specifies the node's type from the set $\{i, q, p\}$. 
		Figure \ref{fig:nfqdp-cfg} illustrates this construction for the input string $aabb$.

		From the root node, the agent traverses the graph depth-wise, pursuing sub-quests to explore the different branches of the parse trees. 
		At each production node, the agent checks the grammar for rules whose right-hand side is formed by the non-terminal symbols from the node's children. 
		It then collects the left-hand side non-terminals of all such matching rules and stores them in its response field.

		Upon returning to the root node of a parse tree, the agent pursues the next quest node in the chain to evaluate the next possible parse tree. 
		This forward traversal continues until the agent has explored all possible parse trees in the graph.

		At that point, a termination sequence begins, moving backward along the chain of quest nodes. 
		For each quest node, the agent records a boolean in its response field. 
		The response is set to ``true" if its own parse tree was valid (i.e., its root contained the grammar's start symbol) 
		or if the subsequent quest node already has a response of ``true". 
		Otherwise, the response is ``false". 
		After this backward pass is complete, the agent reaches to the initial root of the entire graph and halts.

		Because the grammar is in Chomsky Normal Form, any parse tree for the input string is a binary tree. 
		This structure ensures that the total number of possible parse trees is bounded 
		and is equal to the Catalan number corresponding to the input string's length. 
		Furthermore, since the set of production rules is finite, 
		the set of all non-terminal symbols that can appear in the response of any production node is also finite. 
		These two conditions together guarantee that the pre-built graph has a bounded size 
		and that the traversal process is guaranteed to terminate for any non-empty input string.

		Finally, the special case of the empty string is handled directly. 
		If the language contains the empty string, the agent can be designed to immediately halt and accept when the input is pre-built with a single quest node. 
		This construction shows that an NFQDP can be built to accept any given CFL.
		\qedhere
	\end{proof}

\section{Computability of Reference-Augmented Finite Quest Decision Processes}
\label{app:rqdp-computability}

	\begin{defn}
		A \textit{\textbf{Reference-Augmented Finite Quest Decision Process}} (RQDP) is an FQDP augmented with three components that enable memory retrieval:
		\begin{itemize}
			\item A non-finite reference set, $\mathcal{T}$, whose elements serve as tags for nodes. Multiple nodes can share the same reference. 
			This set is equipped with an equality operator, $\mathcal{T} \times \mathcal{T} \to \{\text{true}, \text{false}\}$.
			\item A \textit{\textbf{reference-generating function}}, $\tau: \mathcal{T} \times \mathcal{G} \to \mathcal{T}$. 
			When any new node is created, its reference is computed by applying $\tau$ to the parent node's reference and the new node's goal.
			\item A \textit{\textbf{retrieve}} action, $\mathcal{V}^{\leq C} \to \mathcal{G}$. 
			When the agent performs this action, it outputs a goal. 
			The Quest Graph then creates a new node, attaches it to the focus node, and assigns it a reference using the $\tau$ function. 
			It then uses this new reference to retrieve the most recently updated response associated with a matching reference in its history; 
			this response is then assigned to the new node. If no such match exists, the new node's response is set to the empty symbol, $\varepsilon$.
		\end{itemize}

		A \textit{\textbf{Non-deterministic Reference-Augmented Finite Quest Decision Process}} (NRQDP) is a variant of the RQDP 
		that incorporates the non-deterministic, hierarchical actions of the NFQDP.
	\end{defn}

	We will now show that the RQDP is Turing complete. 
	This is established by demonstrating that it can simulate a standard Turing machine with a one-way infinite tape (\citet{hopcroft2001introduction}). 
	Because the non-deterministic NRQDP can simulate its deterministic counterpart, it follows that the NRQDP is also Turing complete.

	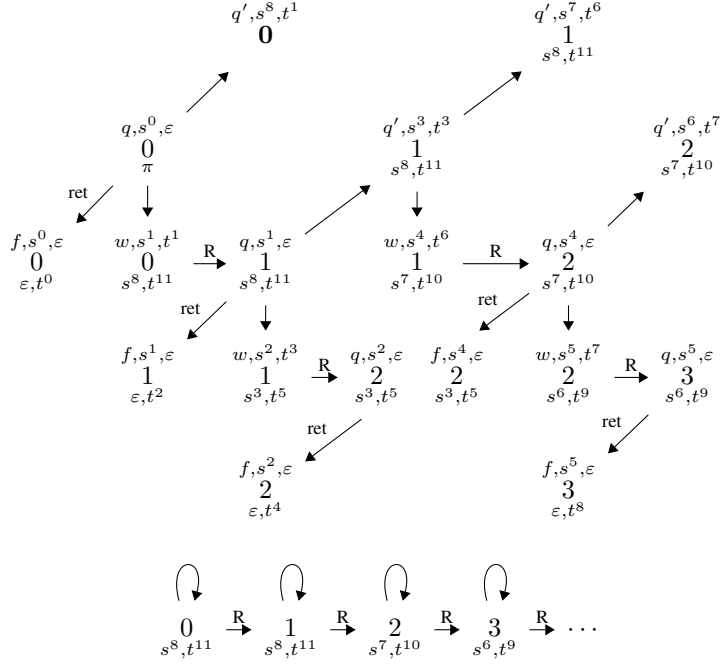
\begin{figure}[ht!]
		\centering % To center the diagram if it's narrower than \textwidth
		\[
			\begin{tikzcd}[row sep=1em, column sep=1em, every arrow/.append style={solid, -Triangle}]
			&& \questnode[\mathbf{0}]{q', s^8, t^1}{} && \questnode[1]{q', s^7, t^6}{s^8, t^{11}} && \\
			& \questnode[0]{q, s^0, \varepsilon}{\pi} \arrow[dl, "\text{ret}"'] \arrow[ur] \arrow[d] && \questnode[1]{q', s^3, t^3}{s^8, t^{11}} \arrow[ur] \arrow[d] && \questnode[2]{q', s^6, t^7}{s^7, t^{10}} & \\
			\questnode[0]{f, s^0, \varepsilon}{\varepsilon, t^0} & \questnode[0]{w, s^1, t^1}{s^8, t^{11}} \arrow[r, "\text{R}"] & \questnode[1]{q, s^1, \varepsilon}{s^8, t^{11}} \arrow[dl, "\text{ret}"'] \arrow[ur] \arrow[d] & \questnode[1]{w, s^4, t^6}{s^7, t^{10}} \arrow[r, "\text{R}"] & \questnode[2]{q, s^4, \varepsilon}{s^7, t^{10}} \arrow[dl, "\text{ret}"'] \arrow[ur] \arrow[d] && \\
			& \questnode[1]{f, s^1, \varepsilon}{\varepsilon, t^2} & \questnode[1]{w, s^2, t^3}{s^3, t^5} \arrow[r, "\text{R}"] & \questnode[2]{q, s^2, \varepsilon}{s^3, t^5} \quad \questnode[2]{f, s^4, \varepsilon}{s^3, t^5} \arrow[dl, "\text{ret}"'] & \questnode[2]{w, s^5, t^7}{s^6, t^9} \arrow[r, "\text{R}"] & \questnode[3]{q, s^5, \varepsilon}{s^6, t^9} \arrow[dl, "\text{ret}"'] & \\
			&& \questnode[2]{f, s^2, \varepsilon}{\varepsilon, t^4} && \questnode[3]{f, s^5, \varepsilon}{\varepsilon, t^8} &&
			\end{tikzcd}
		\]
		\[
			\begin{tikzcd}[row sep=1em, column sep=1em, every arrow/.append style={solid, -Triangle}]
			\questnode[0]{}{s^8, t^{11}} \arrow[r, "\text{R}"] \arrow[loop above] & \questnode[1]{}{s^8, t^{11}} \arrow[r, "\text{R}"] \arrow[loop above] & \questnode[2]{}{s^7, t^{10}} \arrow[r, "\text{R}"] \arrow[loop above] & \questnode[3]{}{s^6, t^9} \arrow[r, "\text{R}"] \arrow[loop above] & \cdots
			\end{tikzcd}
		\]
		\caption{
			An RQDP with $C=4$ simulating a TM. 
			Top: The Quest Graph execution trace, 
			where parent-child arrows are for visualization only and do not represent the graph's undirected edges. 
			Edges labeled ``ret" indicate a retrieve action, while ``R" corresponds to a rightward move of the TM head. 
			The depicted simulation follows the move sequence $RRLRRLLL$, starting from the quest node marked with a parent symbol ($\pi$) 
			and ending at the node with the bold reference. 
			Bottom: The corresponding reference graph, where each numbered node stores the cached response for that reference. 
			In the diagrams, the state sequence is denoted by $s^0, s^1, \ldots$ and tape symbols by $t^0, t^1, \ldots$, in order of their creation. 
			The diagonal layout is a stylistic choice to intuitively align rightward movement 
			with the rightward direction on the page and to ensure retrieve action arrows point clearly to the most recently updated nodes.
		}
		\label{fig:rqdp-tm}
	\end{figure}

	\begin{thm}
		An agent with a local context capacity of $C=4$ 
		operating under a Reference-Augmented Finite Quest Decision Process (RQDP) 
		can simulate an arbitrary Turing machine.
	\end{thm}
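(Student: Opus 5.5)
We simulate a TM with a one-way infinite tape by letting the reference of a node encode the tape position of the head. We take $\mathcal{T} = \mathbb{N}$ and let the reference-generating function be
$\tau(\xi, g) = \xi + 1$ when the goal $g$ has type ``move right'', and $\tau(\xi, g) = \xi$ otherwise.
The matching operator is equality. The tape is then split into two parts. Cells to the left of the head, and the current cell, are held implicitly by the chain of parent-marked quest nodes on the FQDP stack, one per pending right move. Cells to the right of the head are held by the reference graph: each tape cell's last written symbol is the most recently updated response carrying that cell's reference.

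The construction uses four goal types, as in Figure \ref{fig:rqdp-tm}.
\begin{itemize}
\item A quest node $q$ carries in its goal the TM state on entry. Its reference is the current head position.
\item A retrieve node $f$, created by the retrieve action with the same reference, returns the currently stored symbol of the cell. If the response is $\varepsilon$, the agent reads it as the blank $b_{\text{tm}}$.
\item A write/move node $w$ carries the new state and the symbol just written. Completing it records (state, symbol) under that reference, and thus updates the tape cell.
\item A replica node $q'$ is used after returning from a subtree.
\end{itemize}

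The agent's rules, keyed on the last child of the focus node, are the following.
\begin{itemize}
\item At a fresh $q$ or $q'$ node with no retrieve child, it retrieves the current symbol.
\item Given the symbol and the state in its own goal, it computes $\delta_{\text{tm}}$.
\item On an $R$ move, it discovers a $w$ child recording the written symbol, and from there a sub-quest $q$ whose reference is $\xi+1$. This is a push, with the parent responding $\pi$.
\item On an $L$ move, it completes the current quest, responding with the new state and the written symbol, and moves to the parent. This is a pop, and the reference graph head returns to $\xi-1$.
\item When the parent regains focus with a completed sub-quest as its last child, it reads the returned state and spawns a replica $q'$ carrying that state. The replica then re-retrieves the cell symbol and continues.
\end{itemize}
Replicas keep the number of children of each node bounded, since a node may be revisited unboundedly often. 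With this, every node has at most one parent plus three children (retrieve, write, sub-quest/replica), which gives $C=4$ and finitely many children, as the FQDP requires. Halting on an accepting state triggers the same $\varphi$-propagation up the parent chain as in Lemma \ref{lem:fqdp-dpda}, followed by stop at the root.

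Correctness is established by an invariant, checked by induction on TM steps. Whenever the focus is on a $q$ or $q'$ node whose last child is a retrieve node, three things hold. First, the node's reference equals the head position. Second, its goal holds the current TM state. Third, the retrieved response equals the current tape symbol at that position. Each case of $\delta_{\text{tm}}$ restores the invariant. Moving left off cell $0$ corresponds to popping past the root, which the one-way tape convention forbids, or which we treat as staying put.

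The main obstacle is the third clause: showing that ``most recently updated response with matching reference'' really is the current symbol. Every write to cell $\xi$ must be recorded on some node of reference $\xi$, and no later-updated node with reference $\xi$ may hold a stale value. The subtle case is the symbol written just before a left move, which lives in the completed quest's response. The sensitive points are the parent's $\pi$ marking and the $w$ nodes' responses. These need care, possibly by requiring that $\pi$ and $\varepsilon$ responses are never counted as updates, or by having $w$ share the reference of the cell it writes. I would first fix the timestamp ordering of updates precisely, then verify the invariant case by case: right move, left move, and return through a replica.
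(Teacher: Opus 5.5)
You have the paper's architecture: tape positions as references, $\tau$ incrementing only when a quest node $q$ is created, node types $q,q',f,w$, the left part of the tape on the stack and the right part in the reference store, replicas to keep the degree at most $4$, and $\varphi$-termination. But the step you flag as the main obstacle is left open, and with your design choice it fails. Your replica $q'$ re-retrieves the symbol of cell $\xi$. The replica is created by a discover-sub-quest action of a node at reference $\xi$, and that action overwrites the parent's response with the parent mark $\pi$. The paper's sketch stresses exactly this: ``the agent overwrites the parent's response with a parent mark ($\pi$), which is why the tape symbol for this portion is stored directly in the state of the parent node''. So when the replica retrieves, the most recently updated response with reference $\xi$ is $\pi$, and that is what it gets.

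Your remedies do not fix this within the RQDP as defined. Declaring that $\pi$ never counts as an update is an extra convention on the retrieval rule. The $\varepsilon$ half of that convention is harmless, since a freshly created node has not been responded to, and both constructions rely on that. Giving $w$ the reference of its cell changes nothing, because $w$ already has reference $\xi$ and the $\pi$ overwrite happens after $w$ completes.

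The missing idea is that a replica should never retrieve. When the node at $\xi$ regains focus with a completed write node as its last child, its local context already holds the returned state (in $w$'s response) and the symbol last written at $\xi$ (in $w$'s goal). It therefore creates the replica with goal $(q', s, t)$, and the replica computes $\delta_{\text{tm}}$ directly from its own goal, with no fetch child. Retrieval then happens only at a freshly created $q$ at reference $\xi+1$, right after a rightward step, and there your third clause is easy. The last writes to reference $\xi+1$ are the respond actions made when the head last left cell $\xi+1$ leftward: the $L$-move response and its propagation up the replica chain at $\xi+1$. No $\pi$ mark or other write at $\xi+1$ follows them, and if there are none, $\varepsilon$ is read as blank. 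Your invariant then becomes the paper's: focus on a replica (symbol in its goal) or on a $q$ whose last child is a fetch node (symbol in the fetch response).

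This also explains why the paper can let $w$ record its child's result, which carries the symbol of cell $\xi+1$, under reference $\xi$. That stale entry is never read, since reference $\xi$ is retrieved again only after the head has left $\xi$ leftward with a genuine response.

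Finally, your rule list lacks the pop-propagation case. A $q$ or $q'$ node whose last child is a completed replica must respond with the replica's result and move to its parent. As written, your rule for a completed last child would spawn yet another replica instead of popping.
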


	\begin{proof}
		To simulate a TM, we first define the structure of the RQDP nodes used in the construction. 
		Each node's goal is a tuple $(\text{type}, s, t)$, representing the node's type, a TM state $s \in S_{\text{tm}}$, and a tape symbol $t \in \Gamma_{\text{tm}}$. 
		The response is a tuple $(s', t')$, representing the next TM state and the symbol to be written. 
		We define four node types to drive the simulation:
		\begin{itemize}
			\item $q$ (quest node): For initiating a sub-quest, which simulates moving the tape head right.
			\item $q'$ (replica node): A copy of a quest node, used to manage the local context size, similar to its role in the FQDP-DPDA simulation.
			\item $f$ (fetch node): For retrieving a tape symbol.
			\item $w$ (write node): For recording a tape symbol.
		\end{itemize}
		The simulation strategy uses this machinery to represent the TM's tape in two distinct ways: 
		the portion of the tape to the left of the head is maintained by the FQDP's stack-like hierarchy, 
		while the portion to the right is managed by the reference and retrieval system.

		We define the reference set to be the set of tape positions, $\mathbb{Z}^+$, 
		where a reference $\xi \in \mathbb{Z}^+$ denotes the current tape position. 
		The reference-generating function, $\tau$, is defined to simulate the rightward movement of the tape head. 
		When the agent creates a new quest node (type $q$), 
		the function increments the reference to the next tape position: $\tau: (\xi, (q, \cdot, \cdot)) \mapsto \xi + 1$. 
		The $\cdot$ symbol is a wildcard, indicating this rule applies regardless of the state or tape symbol in the goal. 
		For all other node types or conditions, the reference remains unchanged. 

		The agent's policy is a set of deterministic rules based on the type of the focus node and the state of its children. 
		The simulation cycle can be constructed as follows:
		\begin{enumerate}
			\item When the focus is on a quest node:
			The agent's primary goal is to gather the information needed to compute a TM transition.
			\begin{enumerate}
				\item If the node has no children, the agent first needs to read the current tape symbol. 
				It performs a retrieve action to create a fetch node. 
				According to the reference-generating function, this new fetch node is assigned the same reference as its parent quest node. 
				This ensures that the system retrieves the symbol from the most recently updated response of a node with that reference, 
				correctly fetching the tape symbol at the current head position.

				\item If the node has a fetch node as a child, the agent now has both the state $s$ (from its own goal) and the tape symbol $t$ (from the fetch node's response). 
				It computes the TM transition $\delta_{\text{tm}}: (s, t) \mapsto (s', t', d')$. The subsequent action depends on the move direction, $d' \in \{L, R\}$:
				\begin{itemize}
					\item If $d'= R$ (move right), the agent discovers a write node to store the new symbol, with the goal set to $(w, s', t')$,
					it then moves the focus to this new write node.
					\item If $d'= L$ (move left), the agent responds to the current node with $(s', t')$, which completes the node. 
					It then moves the focus to its parent.
				\end{itemize}
				\label{itm:TM-transition}

				\item If the node has a write node as a child, this indicates that the agent is executing a leftward move. 
				It discovers a new replica node with $(q', s, t)$ as the goal, 
				where $s$ is from the response of the write node, and $t$ is from the goal of the write node.
				Then, it pursues this replica node as a sub-quest.
				\label{itm:write-complete}
				
				\item If the node has both a write node and a replica node as children, 
				this indicates that the agent is returning from a replica node (a leftward move). 
				The agent responds with the result from the completed replica node and moves the focus to its own parent, continuing the leftward sequence.
				\label{itm:replica-complete}
			\end{enumerate}

			\item When the focus is on a replica node, most cases are similar to the quest node:
			\begin{enumerate}
				\item If the node has no children, the agent uses the state and tape symbol from its goal to compute the TM transition 
				and then performs the same actions described in item \ref{itm:TM-transition}.
				\item The rest of the cases are the same as those described in items 
				\ref{itm:write-complete} and \ref{itm:replica-complete}.
			\end{enumerate}

			\item When the focus is on a write node:
			The agent's goal is to initiate the next rightward step of the simulation.
			\begin{enumerate}
				\item If the node has no children, it discovers a quest node with the goal $(q, s, \varepsilon)$, where $s$ is from the write node's goal. 
				It then moves the focus to this new quest node to begin the next simulation cycle.
				\item If it already has a quest node child, this signifies a return from a deeper part of the computation (a leftward move). 
				The agent responds with the result from the completed child and moves the focus to its own parent, continuing the leftward sequence.
			\end{enumerate}
		\end{enumerate}
		This cycle continues until the TM reaches a halting state, at which point the RQDP uses the same termination procedure as the FQDP to halt the computation. 
		Figure \ref{fig:rqdp-tm} illustrates an RQDP with $C=4$ simulating a TM.

		The initial tape configuration is established in a setup phase. 
		The process begins with a single quest node at reference $\xi = 0$. 
		From this starting point, a special initializer agent populates the graph with the input string. 
		It does so by repeatedly executing the rightward-move sequence from the simulation procedure described above, 
		with one modification: the retrieve action is replaced with a discover input action to write the tape symbols directly onto the newly created nodes. 
		Once this initialization is complete, the main simulation agent takes over, with its focus starting at the final quest node in the chain, 
		which corresponds to the rightmost end of the initial tape.

		The invariant of this simulation is defined from the perspective of the TM's transitions. 
		The invariant state is the point at which the agent's focus is on either a replica node, 
		or a quest node that has a fetch node as its most recently created child.

		In this configuration, the following properties hold:
		\begin{itemize}
			\item The agent has all the necessary information to compute the next TM transition.
			\item The TM's current state is stored in the goal of the focus node.
			\item The tape symbol at the current head position is retrieved based on the focus node's type: 
			if the focus is a quest node, the symbol is in the response of its fetch child; 
			if the focus is a replica node, the symbol is in the goal of the replica node itself.
			\item The maximum degree of the focus node never exceeds four, ensuring the local context size remains within the limit of $C=4$.
		\end{itemize}
		From this state, the agent's policy is deterministic. 
		It computes the transition and executes a sequence of actions that eventually leads to a new state 
		that also satisfies this invariant, or to a halting condition.

		This construction shows that an RQDP with $C=4$ can simulate a TM.
		\qedhere
	\end{proof}

\section{Computability of Language Model}
\label{app:lm-computability}

	Let a language model be defined as a tuple $LM = (\Gamma_{\text{lm}}, \delta_{\text{lm}}, C)$, where:
	\begin{itemize}
		\item $\Gamma_{\text{lm}}$ is a finite set of tokens (the vocabulary).
		\item $\delta_{\text{lm}}: \Gamma_{\text{lm}}^C \to \Gamma_{\text{lm}}$ is the transition function, which maps a sequence of $C$ tokens to the next token.
		\item $C \in \mathbb{Z}^+$ is the context length, which is finite.
	\end{itemize}

	A finite state machine (FSM) can be defined as a tuple $(S_{\text{fsm}}, \Sigma_{\text{fsm}}, \delta_{\text{fsm}}, s^0_{\text{fsm}}, F_{\text{fsm}})$, where:
	\begin{itemize}
		\item $S_{\text{fsm}}$ is a finite set of states.
		\item $\Sigma_{\text{fsm}}$ is a finite set of input symbols (the input alphabet).
		\item $\delta_{\text{fsm}}: S_{\text{fsm}} \times \Sigma_{\text{fsm}} \to S_{\text{fsm}}$ is the transition function.
		\item $s^0_{\text{fsm}} \in S_{\text{fsm}}$ is the initial state.
		\item $F_{\text{fsm}} \subseteq S_{\text{fsm}}$ is the set of accepting states.
	\end{itemize}

	\begin{lem}
		A language model (LM) with a context size of $C=2$ can simulate an FSM.
	\end{lem}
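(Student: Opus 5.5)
I will construct the LM directly from the FSM, using a context of size $C=2$ as a pair of slots: one holding the FSM's current state and one holding the most recent input symbol. Let the vocabulary be $\Gamma_{\text{lm}} = S_{\text{fsm}} \cup \Sigma_{\text{fsm}}$, where I take the two sets disjoint (renaming symbols if needed); since both are finite, so is $\Gamma_{\text{lm}}$. The initial context is $(s^0_{\text{fsm}}, a_1)$, where $a_1$ is the first input symbol. If the input is empty, the context is just $s^0_{\text{fsm}}$, padded by a blank token if a full window is required.

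The transition function is defined on the windows that actually occur. Set $\delta_{\text{lm}}(s, a) = \delta_{\text{fsm}}(s, a)$ for $s \in S_{\text{fsm}}$ and $a \in \Sigma_{\text{fsm}}$. The dynamics of the context, as described in the paper, are:
\begin{itemize}
\item the model's output is appended autoregressively;
\item the next external input is appended;
\item the window keeps only the last $C$ tokens.
\end{itemize}
I will fix a protocol in which each step consists of one generation followed by one external input. After generating $s' = \delta_{\text{fsm}}(s,a)$, the window is $(a, s')$. Appending the next input $a'$ then yields $(s', a')$, so the window again has the form (state, symbol). On other windows, such as a (symbol, state) pair seen mid-step, $\delta_{\text{lm}}$ can be defined arbitrarily, since the protocol never queries it there.

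Correctness follows by induction on the number of consumed symbols. The invariant is that after consuming $a_1 \dots a_k$, the most recent state token in the window equals the FSM state $\hat\delta_{\text{fsm}}(s^0_{\text{fsm}}, a_1\dots a_k)$. The base case holds by the initialization, and the inductive step is exactly the definition of $\delta_{\text{lm}}$. Acceptance is read off from the last generated token: the LM accepts iff that token lies in $F_{\text{fsm}}$.

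The main obstacle is not the construction but pinning down the interleaving of external input and generation, because the paper's LM definition $\delta_{\text{lm}}:\Gamma_{\text{lm}}^C\to\Gamma_{\text{lm}}$ leaves this implicit. I will fix the protocol stated above explicitly and check that the window really alternates as claimed. If the intended semantics instead appends several tokens per step, I would add a distinguished separator token, or tag state tokens, so that the model can always identify which window position holds the state. With $C=2$ and the ordering fixed, this is not needed.
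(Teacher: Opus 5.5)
Your proof is correct and follows the paper's construction: the vocabulary $S_{\text{fsm}} \cup \Sigma_{\text{fsm}}$, the rule $\delta_{\text{lm}}(s,a)=\delta_{\text{fsm}}(s,a)$, and the alternation of appending an input symbol and then generating the next state, with acceptance read off the final state. Your explicit tracking of the window $(a,s') \mapsto (s',a')$ and your induction invariant make precise what the paper leaves implicit. One small fix: for the empty input there is no generated token, so read acceptance from the most recent state token $s^0_{\text{fsm}}$, which your invariant already covers.
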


	In practice, an LM's context is updated by appending user inputs. 
	For the purpose of this simulation, we adopt this same mechanism to feed input symbols from the FSM into the LM's context.

	\begin{proof}
		Given an FSM, we can construct an LM with a context size of $C=2$ as follows. 
		First, we define the LM's vocabulary as the union of the FSM's states and input alphabet: $\Gamma_{\text{lm}} = S_{\text{fsm}} \cup \Sigma_{\text{fsm}}$. 
		Second, we define the LM's transition function $\delta_{\text{lm}}$ to perfectly emulate the FSM's transition function, 
		such that for any state $s \in S_{\text{fsm}}$ and input symbol $a \in \Sigma_{\text{fsm}}$, 
		we set $\delta_{\text{lm}}(s, a) = \delta_{\text{fsm}}(s, a)$.

		The simulation begins with the FSM's initial state, $s^0_{\text{fsm}}$, in the context. 
		For each symbol of the input string, the process alternates between two steps:
		\begin{itemize}
			\item First, the current input symbol is appended to the context.
			\item Second, the LM applies its transition function to the context 
			(which now contains the current state and input symbol) to produce and append the next state.
		\end{itemize}
		An input string is considered accepted if the final state in the context after processing the entire string is in $F_{\text{fsm}}$. 
		
		This construction shows that an LM with $C=2$ can simulate an FSM.
		\qedhere
	\end{proof}

	\begin{lem}
		There exists a finite state machine (FSM) that can simulate a language model (LM).
	\end{lem}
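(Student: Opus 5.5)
The plan is to build the FSM directly, taking its states to be the possible contents of the LM's context window. Since $\Gamma_{\text{lm}}$ is finite and $C$ is finite, the set $\Gamma_{\text{lm}}^{\leq C} = \bigcup_{k=0}^{C} \Gamma_{\text{lm}}^{k}$ of all token sequences of length at most $C$ is finite. The shorter sequences are included so that the start-up phase, before the window fills, is also covered. I will set $S_{\text{fsm}} = \Gamma_{\text{lm}}^{\leq C}$. The initial state $s^0_{\text{fsm}}$ is the initial context (for example, the empty sequence or the given prompt). Following the convention of the previous lemma, external inputs are tokens appended to the context, so I take $\Sigma_{\text{fsm}} = \Gamma_{\text{lm}}$.

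Next I define the transition function so that one FSM step reproduces one round of context update. Let $\mathrm{tr}(w)$ denote the last $C$ tokens of a sequence $w$, and let $\mathrm{tr}(w)=w$ if $|w| \le C$. An input token $a$ arriving in state $w$ is handled in two stages. First the input is appended, giving $w_1 = \mathrm{tr}(wa)$. Then the model's own token is appended, giving $\delta_{\text{fsm}}(w,a) = \mathrm{tr}(w_1\,\delta_{\text{lm}}(w_1))$. While the window is not yet full, $\delta_{\text{lm}}$ is applied to a padded version of $w_1$.

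If the LM is allowed to emit several tokens between inputs, I will treat each autoregressive step as a separate transition. This can be done with a distinguished ``no input'' symbol added to $\Sigma_{\text{fsm}}$, or equivalently by folding any fixed finite number of steps into one transition. Either way the state set stays finite. Acceptance is whatever predicate the LM uses, read off the window (for instance, whether the last token lies in an accepting set). Because that predicate is a function of the window, it corresponds to some $F_{\text{fsm}} \subseteq S_{\text{fsm}}$.

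Correctness then follows by induction on the number of steps. The invariant is that after processing the first $n$ inputs, the FSM state equals the LM's current context window. The base case holds by the choice of $s^0_{\text{fsm}}$. The inductive step holds because the LM's next context depends only on its current window and the incoming token, and $\delta_{\text{fsm}}$ was defined to compute exactly that update.

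I expect the main obstacle to be bookkeeping rather than substance: handling the partially filled window, and interleaving inputs with possibly several generated tokens without introducing unbounded state. If $\delta_{\text{lm}}$ is deterministic and the number of generated tokens per input is bounded, or is modelled token by token as above, the state set remains $\Gamma_{\text{lm}}^{\leq C}$, which has $O(|\Gamma_{\text{lm}}|^{C})$ elements and is therefore finite. This lemma together with the preceding one yields Theorem \ref{thm:lm-computability}.
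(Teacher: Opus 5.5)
Your proposal is correct and is essentially the paper's construction: FSM states are the possible LM context windows (the paper uses a bijection $\sigma_{\text{lm}}: \Gamma_{\text{lm}}^C \to S_{\text{fsm}}$), $\Sigma_{\text{fsm}} = \Gamma_{\text{lm}}$, an input token drives the sliding-window update, and autoregressive generation is tracked as further window updates. The differences are only bookkeeping. You also allow partially filled windows, and you handle generation steps with a ``no input'' symbol or by folding them into input transitions, whereas the paper works with full windows and uses $\varepsilon$-transitions, which in fact lie outside its own FSM definition.
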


	\begin{proof}
		Let $\sigma_{\text{lm}}: \Gamma_{\text{lm}}^C \to S_{\text{fsm}}$ be a bijective mapping from the set of all possible LM contexts to the FSM's state set, $S_{\text{fsm}}$. 
		Since both the LM's vocabulary ($\Gamma_{\text{lm}}$) and context length ($C$) are finite, the state set $S_{\text{fsm}}$ is also finite. 
		For this construction, we set the FSM's input alphabet to be the LM's vocabulary, $\Sigma_{\text{fsm}} = \Gamma_{\text{lm}}$.
		
		We define the FSM's transition function, $\delta_{\text{fsm}}$, to simulate the LM's two primary actions:
		\begin{itemize}
			\item Autoregressive Generation: When the LM generates a token internally from its current context $(t_1, \ldots, t_C)$ to produce $t_{C+1}$, 
			the FSM performs an epsilon-transition from state $\sigma_{\text{lm}}(t_1, \ldots, t_C)$ to the new state $\sigma_{\text{lm}}(t_2, \ldots, t_C, t_{C+1})$.
			\item Consuming External Input: When an external input token $a$ is appended to the LM's context, the FSM transitions on that symbol. 
			From state $\sigma_{\text{lm}}(t_1, \ldots, t_C)$, the FSM reads input $a$ and transitions to the new state $\sigma_{\text{lm}}(t_2, \ldots, t_C, a)$.
		\end{itemize}
		
		The FSM's initial state is defined by the LM's initial context. 
		Through these transitions, the FSM's state at any point perfectly tracks the LM's context. 
		This construction shows that an FSM can simulate the behavior of a finite-context LM.
		\qedhere
	\end{proof}

	With the previous two lemmas, we can conclude that

	\begin{thm}
		A language model with finite context is computationally equivalent to a finite state machine (FSM).
	\end{thm}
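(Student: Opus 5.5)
The statement follows by combining the two lemmas just established, one for each direction of the equivalence. First, the lemma showing that an LM with context size $C=2$ can simulate an FSM gives one inclusion: every language accepted by a finite state machine is accepted by some finite-context LM. Second, the lemma constructing an FSM whose states are in bijection with $\Gamma_{\text{lm}}^C$ gives the other inclusion: every finite-context LM can be tracked step by step by a finite state machine. Together these show the two models recognize the same class of languages, namely the regular languages.

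For the first direction I would check that the alternation of ``append input symbol'' and ``generate next state'' keeps the context equal to $(s, a)$ at every generation step. This requires fixing how the context is read once input and output tokens interleave; I would take the window to be the last $C=2$ tokens. Under that convention, $\delta_{\text{lm}}(s,a)=\delta_{\text{fsm}}(s,a)$ reproduces the FSM run exactly. Acceptance is read off the final generated token.

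For the converse, the FSM built from the bijection $\sigma_{\text{lm}}$ has $|\Gamma_{\text{lm}}|^C$ states, which is finite. Its transitions on input symbols are ordinary transitions, but autoregressive generation appears as $\varepsilon$-transitions. I would remove these in one of two ways:
\begin{itemize}
\item apply the standard $\varepsilon$-closure construction, which is deterministic here because each context has exactly one generated successor; or
\item compose each input step with the deterministic chain of generation steps that follows it, fixing a convention for how many tokens the LM generates between inputs, such as a bounded number or until a designated delimiter token.
\end{itemize}

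The main obstacle is that $\varepsilon$-chains might not terminate: the LM can generate forever without requesting input. Since the state space is finite, such a chain must enter a cycle, and this cycle can be detected and treated as a non-accepting sink. With that, the resulting DFA accepts exactly the LM's language, and the equivalence holds.
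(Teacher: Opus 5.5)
Your proposal follows the paper's proof: the theorem is obtained by combining the lemma that an LM with $C=2$ over vocabulary $S_{\text{fsm}} \cup \Sigma_{\text{fsm}}$ simulates an FSM by alternating input and generation with the lemma that an FSM whose states are in bijection with $\Gamma_{\text{lm}}^C$ simulates the LM, treating generation as $\varepsilon$-moves, and your attention to the window convention and to eliminating those $\varepsilon$-moves goes beyond what the paper spells out. One correction: $\varepsilon$-closure does not yield a deterministic automaton here, because the closure of a context is its entire generation orbit and the resulting automaton existentially chooses how long to generate between inputs; it is your second option (fix the generation protocol, compose, and treat cycles as a non-accepting sink) that gives a DFA faithful to a deterministic LM.
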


\section{Analysis of Bounded Maximum-Dependency Computation Graphs}
\label{app:bmcg}

	A \textit{\textbf{Maximum-Dependency Computation Graph}} (MCG) is a directed acyclic graph whose nodes are uniquely ordered, 
	such that every node receives an incoming edge from all nodes with a lower order.

	\begin{lem}
		For any computation graph, there exists an equivalent Maximum-Dependency Computation Graph (MCG) that preserves all original dependencies and can be constructed by adding at most one node.
	\end{lem}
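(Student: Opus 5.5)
We take a topological order of the given computation graph $G=(V_G,E_G)$ and complete it to the maximal DAG compatible with that order. Since $G$ is a DAG, its nodes admit a topological ordering $v_1,\dots,v_n$ such that every edge $(v_i,v_j)\in E_G$ satisfies $i<j$; this follows from the standard fact that a finite DAG always has a source node, which we remove and recurse on. We then define the MCG $M$ on the same node set with the same order, with edge set $\{(v_i,v_j): i<j\}$. By construction every original edge is an edge of $M$, so $G$ is a subgraph of $M$. $M$ is acyclic because every edge increases the index, and it satisfies the definition of an MCG: each node receives an edge from all lower-ordered nodes.

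Next we check that $M$ is \emph{equivalent} to $G$, i.e.\ that the extra edges do not change the computed values. Each operation at $v_j$ still reads only its original predecessors in $G$; the added edges $(v_i,v_j)$ with $(v_i,v_j)\notin E_G$ are dummy dependencies whose inputs the operation ignores. Hence any evaluation order valid for $M$ is valid for $G$, since $M$'s constraints are a superset, and node values coincide. The order constraints of $M$ are always satisfiable, namely by $v_1,\dots,v_n$ itself, so no schedule is lost.

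The remaining point is the clause ``by adding at most one node'', which we expect to be the main subtlety, since the construction above adds none. The intended reason is presumably that a computation graph may have several sink nodes (outputs) or several source nodes, while the simulations in Section~\ref{section:cg_simulation} start from a single root whose completion certifies the whole computation. We would handle this by adding one fresh node $v_{n+1}$ placed last in the order, whose operation collects (or trivially ignores) the outputs of all sinks. In $M$ it automatically depends on all of $v_1,\dots,v_n$, so it serves as a unique root and sink. Adding it keeps the graph an MCG and preserves all original dependencies, which gives the bound of at most one added node. If the graph already has a unique sink, we can take that sink as last in the topological order and no node is needed.
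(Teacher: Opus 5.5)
Your proof is correct and is essentially the paper's argument: fix a total order compatible with the DAG's edges, add every missing forward edge, and use the one permitted extra node only as a super-sink when there are several terminal nodes. The paper gets its order by backward longest-path layering from that terminal rather than a generic topological sort, which makes no difference; one small wording fix is that $M$ admits only the single schedule $v_1,\dots,v_n$, so ``no schedule is lost'' should read ``at least one valid schedule remains''.
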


	\begin{proof}
		The construction involves creating a total ordering of the nodes and then adding edges to form the MCG.
		
		\begin{enumerate}
			\item First, we establish an ordering by identifying all terminal nodes (those with no outgoing edges). 
			If more than one exists, a new, single terminal node is created, connected to all original ones, and assigned order $0$. 
			\item We then traverse the graph backward: direct predecessors of order-$k$ nodes are assigned order $k+1$, repeating until all nodes are ordered.
			If a node is already ordered, choose the maximum of the existing order and the new order.
			This process is guaranteed to terminate as the graph is a DAG. 
			Any nodes sharing the same order are then assigned an arbitrary sub-order to create a total ordering. 
			\item Finally, for every pair of nodes, if one precedes the other in the total ordering and an edge does not already exist, one is added. 
		\end{enumerate}

		This process creates an MCG where every node depends on all preceding nodes. 
		Crucially, it preserves the original graph's precedence, as our backward-traversal ordering ensures that if an edge existed originally, 
		the source node is guaranteed to precede the target node in the final ordering.
		\qedhere
	\end{proof}

	A \textit{\textbf{Bounded Maximum-Dependency Computation Graph}} (BMCG) is an augmented version of an MCG with the following properties:
	\begin{itemize}
		\item There is a constant, $C \in \mathbb{Z}^+$, that defines the maximum allowed in-degree (number of dependencies) for any node.
		\item Any node in the original MCG with an in-degree greater than $C$ is transformed into a hierarchical tree of proxy nodes. 
		This structure connects the original node to all of its dependencies while ensuring that the in-degree of any node in the resulting graph is no greater than $C$.
	\end{itemize}

	The total number of added proxy nodes and edges in a BMCG is a function of two parameters: 
	the total number of nodes ($N$) in the original MCG and the maximum in-degree constraint ($C$). 
	The detailed analysis of this relationship is as follows:

	\begin{align*}
		\intertext{Let $k(d)$ denote the number of proxy nodes required for a node with in-degree $d$ (see Table \ref{tab:bcmg_proxy_nodes}).}
		k(d) &= 
			\begin{cases}
				\left\lfloor \frac{d-2}{C-1} \right\rfloor, & \text{if } d\geq 2\\
				0,              & \text{otherwise}
			\end{cases}
		\intertext{Let $K = k(N-1)$ denote the number of proxy nodes for the node with the maximum possible in-degree of $N-1$.}
		K &= \left\lfloor \frac{N-3}{C-1} \right\rfloor \\
			&= O(N) \\
		\intertext{Let $N^+$ be the total number of added proxy nodes in the BMCG.}
		N^+ &= \sum_{d=0}^{N-1} k(d) \\
		&= (C-1)\left(\frac{K(K+1)}{2}\right) - K\left[ (C-1)K - \left[ (N - 1) - C \right] \right]\\
		&= K(N-2) - (C-1)\left(\frac{K(K+1)}{2}\right)\\
		&= O(N^2)
	\end{align*}

	\begin{table*}[ht!]
		\centering
		\begin{tabularx}{\linewidth}{@{\extracolsep{\fill}} l | l l l l l l l l l l l l}
			\toprule
			$C$ & $d=1$ & $2$ & $3$ & $4$ & $5$ & $6$ & $7$ & $8$ & $9$ & $10$ & $11$ & $\ldots$ \\
			\midrule
			$2$ & $0$ & $0$ & $1$ & $2$ & $3$ & $4$ & $5$ & $6$ & $7$ & $8$ & $9$ & \multirow{3}{*}{ $\left\lfloor \frac{d-2}{C-1} \right\rfloor$ } \\
			$3$ & $0$ & $0$ & $0$ & $1$ & $1$ & $2$ & $2$ & $3$ & $3$ & $4$ & $4$ &  \\
			$4$ & $0$ & $0$ & $0$ & $0$ & $1$ & $1$ & $1$ & $2$ & $2$ & $2$ & $3$ &  \\
			\bottomrule
		\end{tabularx}
		\caption{
			The number of proxy nodes required for a single $d$-degree node, shown for various context size constraints ($C$) and original node in-degrees ($d$).
		}
		\label{tab:bcmg_proxy_nodes}
	\end{table*}

	Therefore, the final size of the resulting BMCG can be determined. 
	Since the original MCG contains $N$ nodes and $O(N^2)$ edges, and the number of added proxy nodes ($N^+$) is also $O(N^2)$, 
	the final graph is composed of $N + N^+ = O(N^2)$ nodes and $\frac{N(N-1)}{2} + N^+ = O(N^2)$ edges.

\end{document}